\newcommand{\myparagraph}[1]{\needspace{1\baselineskip}\medskip\noindent {\bf #1.}}
\newif\ifcoloron
\def\E{\E}
\newcommand{\EmpRiskGraph}{R_E} 
\newcommand{\StatRiskGraph}{R_S} 
\newcommand{\PEmpRiskGraph}{P_E^{*}} 
\newcommand{\PStatRiskGraph}{P_S^{*}} 
\newcommand{\GeneralizationGap}{GA} 
\title{{Generalization of Geometric Graph Neural Networks with Lipschitz Loss Functions}}}
\author{Zhiyang Wang \quad Juan Cervi\~no \quad Alejandro Ribeiro \thanks{Supported by NSF TILOS and THEORINET Simons. Zhiyang Wang and Alejandro Ribeiro are with the Department of Electrical
and Systems Engineering, University of Pennsylvania, Philadelphia, PA 19104
USA (email: zhiyangw@seas.upenn.edu; aribeiro@seas.upenn.edu).
Juan Cerviño was with the Department of Electrical and Systems Engineering, University of Pennsylvania, Philadelphia, PA 19104 USA. He is
now with the Laboratory for Information and Decision Systems (LIDS),
Massachusetts Institute of Technology, Cambridge, MA 02139 USA (email:
jcervino@mit.edu). Preliminary results are presented in \cite{wangcervino2024}. }
}
\begin{document}

\maketitle
\begin{abstract}
In this paper, we study the generalization capabilities of geometric graph neural networks (GNNs). We consider GNNs over a geometric graph constructed from a finite set of randomly sampled points over an embedded manifold with topological information captured. We prove a generalization gap between the optimal empirical risk and the optimal statistical risk of this GNN, which decreases with the number of sampled points from the manifold and increases with the dimension of the underlying manifold. This generalization gap ensures that the GNN trained on a graph on a set of sampled points can be utilized to process other unseen graphs constructed from the same underlying manifold. The most important observation is that the generalization capability can be realized with one large graph instead of being limited to the size of the graph as in previous results. The generalization gap is derived based on the non-asymptotic convergence result of a GNN on the sampled graph to the underlying manifold neural networks (MNNs). We verify this theoretical result with experiments on multiple real-world datasets.
\end{abstract}

\begin{IEEEkeywords}
Graph neural networks, generalization analysis, manifold neural networks, non-asymptotic convergence
\end{IEEEkeywords}

\section{Introduction}
\label{sec:label}

Graph structures can provide models for many modern datasets as they can capture the internal relationships among the data. Convolutional filters and neural networks on graphs \cite{sandryhaila2013discrete, gama2019convolutional,ortega2018graph} have become the top tool to process signals over graphs. A graph as a discrete model can naturally represent a discrete data structure. Examples include but not limited to social networks \cite{wu2022graph}, protein structures \cite{yin2023coco}, multi-agent control \cite{gosrich2022coverage}. In many practical scenarios, graphs can be seen as samples from a manifold, which can approximate the continuous topological space with finite samples, as in the case of point clouds \cite{bronstein2017geometric}, data manifolds \cite{sharma2022scaling}, and irregular space navigation \cite{CERVINO_ICML}. In this context, the graph convolutional filters and GNNs on these sampled graphs can be proved to converge to the manifold convolutional filters and manifold neural networks \cite{wang2022convolutional, wang2023geometric}. This implies that convolutional structures on graphs with finite sampled points can approximate the counterparts on the underlying manifold.

The main technical contribution of this paper is analyzing the generalization capabilities of GNNs operated on a graph sampled from a manifold on the node level. {We show that GNNs trained on a graph can be implemented on other unseen graphs sampled from the same manifold.} We consider this manifold as a common underlying structure for the graphs and derive a 
generalization bound that decreases with the number of nodes. This common manifold limit model lifts the restriction to the graph size in the generalization analyses in previous works.

\myparagraph{GNN Generalization via Manifold} To present our main result formally, a set of $N$ i.i.d. points $X_N$ is given over an embedded manifold $\ccalM$. Input and target graph signals $\bbx_N$ and $\bby_N$ are sampled on $X_N$ from underlying manifold signals. The goal is to learn a GNN $\bm\Phi$ that estimates $\bby_N$ with $\bm\Phi(\bbH, \bbx_N)$ where $\bbH\in\ccalH$ represents the filter parameter set. We use a $L_2$ loss function $\ell$ to measure the estimation performance between true target $\bby_N$ and the estimated target $\bm\Phi(\bbH, \bbx_N)$. Practically, the GNN is trained to minimize an empirical risk written as 
\begin{equation}\label{eqn:empirical_risk}
  \EmpRiskGraph (\bbH) = \ell(\bm\Phi(\bbH,\bbx_{N}),\bby_{N}).
\end{equation}
While theoretically, the GNN aims to minimize a statistical risk, which is written as 
\begin{equation}\label{eqn:statistical_risk}
  \StatRiskGraph (\bbH)  =  \mathbb{E}_{X_{N}}\left[ \ell\left( \bm\Phi(\bbH,\bbx_{N}), \bby_{N}\right)\right]. 
\end{equation}
The generalization gap is defined to be 
\begin{equation}\label{eqn:generalization_gap}
  GA= \min_{\bbH\in \ccalH}  \StatRiskGraph (\bbH)-  \min_{\bbH\in\ccalH} \EmpRiskGraph (\bbH) .
\end{equation}
We analyze the generalization gap in \eqref{eqn:generalization_gap} through the convergence of GNNs to the neural networks built on the underlying manifold, which is manifold neural network \cite{wang2022convolutional, wang2024stability}. With the non-asymptotic convergence results, we can prove that the generalization gap between the GNN performances on the finite $N$ training points of the sampled graph and the true distribution of these $N$ points is small and decreases with $N$. 
\begin{theorem*}[Informal]
    \label{thm:inform-generalization}
    Consider a graph constructed on $N$ i.i.d. randomly sampled points over a $d$-dimensional manifold $\ccalM$ with respect to the measure $\mu$ over the manifold. Then, the generalization gap of a GNN trained on this graph satisfies with probability $1-\delta$ that
    \begin{equation}
    \label{eqn:informal}
        GA =\ccalO\left( \left( \frac{\log N/\delta}{N}\right)^{\frac{1}{d+4}}\right).
    \end{equation}
\end{theorem*}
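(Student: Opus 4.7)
The plan is to introduce the manifold neural network (MNN) risk $R^{\ccalM}(\bbH):=\ell(\bm\Phi_{\ccalM}(\bbH,\bbx),\bby)$ as an intermediate quantity, where $\bm\Phi_{\ccalM}$ is the MNN on $\ccalM$ sharing the filter coefficients $\bbH$ of the GNN and $\bbx,\bby$ are the underlying manifold signals. Crucially, $R^{\ccalM}(\bbH)$ depends only on $\ccalM$ and $\bbH$ and is non-random with respect to the sampling $X_N$. Let $\bbH_E^{*}\in\argmin_{\bbH\in\ccalH}\EmpRiskGraph(\bbH)$; then $\min_{\bbH}\StatRiskGraph(\bbH)\leq\StatRiskGraph(\bbH_E^{*})$ yields
\begin{equation*}
GA \;\leq\; \sup_{\bbH\in\ccalH}\bigl|\StatRiskGraph(\bbH)-R^{\ccalM}(\bbH)\bigr| \;+\; \sup_{\bbH\in\ccalH}\bigl|R^{\ccalM}(\bbH)-\EmpRiskGraph(\bbH)\bigr|,
\end{equation*}
reducing the claim to controlling each supremum at the target rate.

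For the second supremum I would invoke the non-asymptotic GNN-to-MNN convergence from \cite{wang2022convolutional, wang2023geometric}. Under boundedness/Lipschitzness of the filter coefficients and a Lipschitz, bounded loss $\ell$, pointwise convergence of $\bm\Phi(\bbH,\bbx_N)$ to the restriction of $\bm\Phi_{\ccalM}(\bbH,\bbx)$ on $X_N$ holds with probability $1-\delta$ at the target rate $\bigl(\log(N/\delta)/N\bigr)^{1/(d+4)}$. Uniformity over $\ccalH$ then follows by covering $\ccalH$ with an $\epsilon$-net and using Lipschitz control of the map $\bbH\mapsto\bm\Phi(\bbH,\cdot)$ in the filter coefficients, with mesh size chosen so that discretization error sits below the main rate; a union bound over the cover costs only additional logarithmic factors. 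For the first supremum, I would exploit that $R^{\ccalM}$ is deterministic to write
\begin{equation*}
\bigl|\StatRiskGraph(\bbH)-R^{\ccalM}(\bbH)\bigr| = \bigl|\mathbb{E}_{X_N}[\EmpRiskGraph(\bbH)-R^{\ccalM}(\bbH)]\bigr| \;\leq\; \mathbb{E}_{X_N}\bigl|\EmpRiskGraph(\bbH)-R^{\ccalM}(\bbH)\bigr|,
\end{equation*}
and split the expectation on the high-probability event from the previous step. The failure event contributes at most $2B_{\ell}\delta$ by boundedness of $\ell$, and choosing $\delta\asymp 1/N$ preserves the leading rate.

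The main obstacle will be the uniform-in-$\bbH$ version of the GNN-to-MNN convergence. The pointwise rate $(\log N/N)^{1/(d+4)}$ is already delicate: it reflects the intrinsic bandwidth/sampling-error trade-off when a graph Laplacian approximates the Laplace--Beltrami operator on a $d$-dimensional embedded manifold, and the exponent $1/(d+4)$ comes from balancing bias and variance in that kernel-type estimator. Promoting it to a supremum over $\ccalH$ requires quantitative Lipschitz bounds on the composed, nonlinear GNN/MNN maps as functions of the filter coefficients, so that covering-number contributions stay lower order. A secondary technical issue is making precise the comparison of $\bm\Phi_{\ccalM}(\bbH,\bbx)$ with $\bm\Phi(\bbH,\bbx_N)$: one must identify signals on $X_N$ with samples of functions on $\ccalM$ and ensure that the discrete node-wise norm used in $\EmpRiskGraph$ and the continuous $L^2(\ccalM,\mu)$ norm underlying $R^{\ccalM}$ are comparable at the required rate, which I expect to handle via standard manifold-signal sampling and concentration arguments.
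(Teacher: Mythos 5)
Your proposal follows the same core route as the paper: insert the MNN risk $\ell(\bm\Phi(\bbH,\ccalL_\rho,f),g)$ as an intermediate quantity between the statistical and empirical risks, bound the empirical-side term by the high-probability GNN-to-MNN convergence (Proposition \ref{prop:prob-diff} plus the sampling concentration of the discrete norm against $\|\cdot\|_\ccalM$), and bound the statistical-side term by the same convergence taken in expectation (Corollary \ref{cor:expect-diff}). The one genuine divergence is how you handle the data-dependence of the empirical minimizer $\bbH_E$: you promote both comparisons to suprema over $\ccalH$ and propose an $\epsilon$-net over the filter coefficients with a union bound, paying only logarithmic factors. The paper instead evaluates the decomposition directly at $\bbH_E$ with no covering argument; this is legitimate there because the randomness in Proposition \ref{prop:prob-diff} enters only through the spectral convergence of $\bbL_N$ to $\ccalL_\rho$ and the sampling of $f$ --- events that do not depend on $\bbH$ --- so once those concentration events hold, the bound holds simultaneously for every low-pass filter in $\ccalH$ with a uniform constant, and uniformity comes for free. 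Your covering argument is therefore sound but unnecessary, at the cost of requiring explicit Lipschitz control of $\bbH\mapsto\bm\Phi(\bbH,\cdot)$ that the paper never needs; on the other hand, your version would survive if the convergence bound were genuinely $\bbH$-dependent in its failure event, so it is the more robust of the two. Both arguments land on the same rate $\ccalO\bigl(((\log(N/\delta))/N)^{1/(d+4)}\bigr)$.
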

The equation \eqref{eqn:informal} shows that the generalization gap decreases approximately polynomially with respect to the number of points $N$ while the exponent is related to the dimension of the underlying manifold. We can also observe that the generalization gap increases with the dimension of the manifold, i.e. the complexity of the underlying continuous model. This indicates that for a high dimensional manifold, we need more number of sampled points to guarantee the generalization capability.

\myparagraph{Related Works and Significance} Analyses have been carried out in \cite{wang2023geometric, ruiz2020graphon, ruiz2023transferability, levie2021transferability, maskey2023transferability}, where transferability of GNNs are analyzed by comparing the output difference of GNNs on different sizes of graphs when graphs converge to a limit model as manifold or graphon without generalization analysis. In \cite{CERVINO_ICASSP, CERVINO_TSP}, the authors show how increasing the size of the graph as the GNN learns, generalizes to the large-scale graph. 

In \cite{scarselli2018vapnik}, the authors prove the generalization bound of GNNs with VC-dimension which is commonly used for convolutional neural networks. The authors in \cite{verma2019stability} analyze the generalization of a single-layer GNN based on stability analysis, which is further extended to a multi-layer GNN in \cite{zhou2021generalization}. {The authors in \cite{yang2024deeper} extend this analysis to deeper GNNs and analyze the approximation theory of GNNs in \cite{yang2024bridging}.}
In \cite{ma2021subgroup}, the authors give a novel PAC-Bayesian analysis on the generalization bound of GNNs across arbitrary subgroups of training and testing datasets.
The authors derive generalization bounds for GNNs via transductive uniform stability and transductive Rademacher complexity in \cite{esser2021learning, cong2021provable, tang2023towards}. The authors in \cite{yehudai2021local} propose a size generalization analysis of GNNs correlated to the discrepancy between local distributions of graphs. The authors in \cite{shi2024homophily} analyze the transductive generalization of a simple GNN with a contextual stochastic block model. {The results of these works ignore the underlying graph structures, which leads to generalization bounds increasing with the number of nodes -- a behavior contrary to the one experienced in practice.} We consider a continuous manifold model as a generative model for the graphs, which provides powerful theoretical support and is realistic to model real-world applications.

Generalization analysis has also been carried out on the graph-level when considering the graph classification problem.
In \cite{liao2020pac}, the authors provide a generalization bound based on PAC-Bayesian analysis with the bound depending on the maximum degree of the graph and the spectral norms. In \cite{garg2020generalization}, the authors provide a generalization bound on message passing networks comparable to Rademacher bounds in recurrent neural networks. The previous generalization gaps again scale with the size of graphs without capturing the underlying common graph structures.  A generalization analysis is carried out in \cite{maskey2022generalization, maskey2024generalization}, where the graphs are on message-passing neural networks on graphs randomly sampled from a collection of template random graph models. We study the limit of graphs as a manifold, which is {widely accepted in physical dynamics \cite{talmon2015manifold}, and images (the so-called image manifold) \cite{peyre2009manifold, osher2017low}} to process high-dimensional data. In \cite{wangcervnino2024neurips}, the authors also study the generalization analysis of GNNs from a manifold perspective both on node and graph levels. {This work studies the generalization of GNNs to the MNNs directly in a statistical way, while our work focuses on the generalization capability when we only have access to sampled graphs from the manifold. This provides a more direct and applicable framework for evaluating practical performances as we cannot access the continuous manifold model directly.} Most importantly, only one graph is needed to train the GNN and this allows the generalization of all the other unseen sampled graphs from the manifold.

\myparagraph{Organization} Section \ref{sec:preliminary} introduces preliminary concepts of graph convolutions and graph neural networks (GNNs) as well as manifold convolutions and manifold neural networks (MNNs). Section \ref{sec:generalization} introduces the construction of an $\epsilon$-graph where only nodes that are close enough are connected based on sampled points from the manifold. We first establish the convergence result of GNNs constructed on this $\epsilon$-graph to the underlying MNNs with low-pass filters and normalized Lipschitz nonlinearities. 
Based on the convergence result, we present the generalization gap of GNNs on the constructed $\epsilon$-graphs which attest that GNN trained on one $\epsilon$-graph can generalize to other unseen $\epsilon$-graphs from the same manifold. Section \ref{sec:sim} illustrates the generalization results of Section \ref{sec:generalization} with verifications on Arxiv and Cora datasets. Section \ref{sec:conclusion} concludes the paper. Proofs are deferred to the appendices.

\section{Preliminaries}
\label{sec:preliminary}

Let us start with the basic definitions of graph neural networks and manifold neural networks. 

\subsection{Graph Convolutions and Graph Neural Networks}

An undirected graph $\bbG = (\ccalV, \ccalE, \ccalW)$ contains $N$ nodes with a node set $\ccalV$ and edge set $\ccalE \subseteq \ccalV\times \ccalV$. The weights of the edges are assigned by $\ccalW: \ccalE \rightarrow \reals$. Graph signals $\bbx \in \reals^N$ map values to each node. A graph shift operator (GSO) \cite{ortega2018graph} $\bbS \in \reals^{N\times N}$ is a graph matrix with $[\bbS]_{ij} \neq 0$ if and only if $(i,j)\in \ccalE$ or $i=j$, e.g., the graph Laplacian $\bbL$. The GSO can shift or diffuse signals to each node by aggregating signal values of neighbors. A graph convolution is defined based on a consecutive graph shift operation. A graph convolutional filter $\bbh_\bbG$ \cite{gama2019convolutional,ortega2018graph,sandryhaila2013discrete} with filter coefficients $\{h_k\}_{k=0}^{K-1}$ is formally defined as  
\begin{equation}
    \label{eqn:graph_convolution}
\bbh_\bbG(\bbS) \bbx = \sum_{k=0}^{K-1} h_k \bbS^k \bbx .
\end{equation}
We replace $\bbS$ with the spectral decomposition $\bbS = \bbV \bm\Lambda \bbV^H$, where $\bbV$ is the eigenvector matrix and $\bm\Lambda$ is a diagonal matrix with eigenvalues as the entries. We observe that the spectral representation of a graph filter is
\begin{equation}
    \label{eqn:graph_convolution_spectral}
    \bbV^H \bbh_\bbG(\bbS) \bbx =  \sum_{k=1}^{K-1} h_k \bm\Lambda^k \bbV^H \bbx = h(\bm\Lambda)\bbV^H \bbx.
\end{equation}
This leads to a point-wise frequency response of the graph convolution, which is $h(\lambda)= \sum_{k=0}^{K-1} h_k \lambda^k$, depending only on the weights $\{h_k\}_{k=0}^{K-1}$ and on the eigenvalues $\bm\Lambda$ of $\bbS$.

A graph neural network (GNN) is composed of cascading layers that each consists of a bank of graph convolutional filters followed by a point-wise nonlinearity $\sigma:\reals\to\reals$. Specifically, the $l$-th layer of a GNN that produces $F_l$ output features $\{\bbx_l^p\}_{p=1}^{F_l}$ with $F_{l-1}$ input features $\{\bbx^q_{l-1}\}_{q=1}^{F_{l-1}}$ is written as 
\begin{equation} \label{eqn:gnn}
    \bbx_l^p = \sigma\left(\sum_{q=1}^{F_{l-1}} \bbh_\bbG^{lpq}(\bbS) \bbx^q_{l-1} \right),
\end{equation}
for each layer $l=1,2\cdots, L$, {with $\bbx_l^p\in \reals^N$}. The graph filter $\bbh_\bbG^{lpq}(\bbS)$ maps the $q$-th feature of layer $l-1$ to the $p$-th feature of layer $l$ as \eqref{eqn:graph_convolution}. We denote the GNN like \eqref{eqn:gnn} as a mapping $\bm\Phi(\bbH, \bbS, \bbx)$ for the ease of presentation, where $\bbH\in \ccalH\subset \reals^P$ denotes a set of the graph filter coefficients at all layers and $\ccalH$ denotes the set of all possible parameter sets.

\subsection{Manifold Convolutions and Manifold Neural Networks}

We consider a $d$-dimensional compact, smooth and differentiable submanifold $\ccalM$ embedded in {$\reals^\mathsf{D}$}. The embedding induces a probability measure $\mu$ over the manifold with density function $\rho:\ccalM\rightarrow(0,\infty)$, which is assumed to be bounded as $0<\rho_{min}\leq \rho(x) \leq \rho_{max}<\infty$ for each $x\in\ccalM$. Manifold signals \cite{wang2024stability} are likewise defined as scalar functions $f:\ccalM\rightarrow \reals$. We use $L^2(\ccalM)$ to denote $L^2$ functions over $\ccalM$ with respect to measure $\mu$. The inner product of signals $f, g\in L^2(\ccalM)$ is defined as 
\begin{equation}\label{eqn:innerproduct}
    \langle f,g \rangle_{\ccalM}=\int_\ccalM f(x)g(x) \text{d}\mu(x), 
\end{equation}
with the $L^2$ norm defined as $\|f\|^2_\ccalM = \langle f, f\rangle_\ccalM$. The manifold with density $\rho$ is endowed with a weighted Laplace operator \cite{grigor2006heat}, which generalizes the Laplace-Beltrami operator as
\begin{equation}
    \label{eqn:weight-Laplace}
    \ccalL_\rho f = -\frac{1}{2\rho} \text{div}(\rho^2 \nabla f),
\end{equation}
with $\text{div}$ being the divergence operator of $\ccalM$ and $\nabla$ being the gradient operator of $\ccalM$ \cite{bronstein2017geometric}. Manifold shift operation is defined relying on the Laplace operator $\ccalL_\rho$ and on the heat diffusion process over the manifold (see \cite{wang2024stability} for a detailed exposition). For a manifold signal $f\in L^2(\ccalM)$, the manifold shift can be explicitly written as $e^{-\ccalL_\rho} f$. Analogous to graph convolution, manifold convolution \cite{wang2024stability} can be defined in a shift-and-sum manner as  
\begin{align} \label{eqn:manifold-convolution}
   g(x) = \bbh(\ccalL_\rho)f(x) =\sum_{k=0}^{K-1} h_ke^{-k\ccalL_\rho}f(x) \text{.}
\end{align}
Consider the Laplace operator is self-adjoint and positive-semidefinite and the manifold $\ccalM$ is compact, $\ccalL_\rho$ has real, positive and discrete eigenvalues $\{\lambda_i\}_{i=1}^\infty$, written as $\ccalL_\rho \bm\phi_i =\lambda_i \bm\phi_i$ where $\bm\phi_i$ is the eigenfunction associated with eigenvalue $\lambda_i$, which can be seen as the osillation mode. The eigenvalues are ordered in increasing order as {$0\leq \lambda_1\leq \lambda_2\leq \lambda_3\leq \hdots$}, and the eigenfunctions are orthonormal and form an eigenbasis of $L^2(\ccalM)$. When mapping a manifold signal onto one of the eigenbasis $\bm\phi_i$, we have  
\begin{equation}
 [\hat{f} ]_i=\langle f, \bm\phi_i\rangle_{L^2(\ccalM)}=\int_{\ccalM} f(x)\bm\phi_i(x) \text{d}\mu(x),
\end{equation}
the manifold convolution can be written in the spectral domain point-wisely as
\begin{align}
     [\hat{g} ]_i = \sum_{k=0}^{K-1} h_k e^{-k\lambda_i}   [\hat{f} ]_i\text{.}
\end{align}
Hence, the frequency response of manifold filter is given by $\hat{h}(\lambda)=\sum_{k=0}^{K-1} h_k e^{-k\lambda}$, depending only on the filter coefficients $h_k$ and eigenvalues of $\ccalL_\rho$ when $\lambda=\lambda_i$. This can be seen as a generalization of graph convolution when we take the Taylor expansion of $e^{-k}$. To get the representation of the manifold filter in the spectral domain, we can sum over all frequency components and project back to the spatial domain, which is 
\begin{equation}\label{eqn:spectral-filter}
    g =\bbh(\ccalL)f= \sum_{i=1}^{\infty} \hat{h}(\lambda_i) [\hat{f} ]_i \bm\phi_i .
\end{equation}
Manifold neural networks (MNNs) are built by cascading $L$ layers, each of which consists of a bank of manifold filters and a pointwise nonlinearity $\sigma$. Each layer $l=1,2\cdots, L$ can be explicitly denoted as
\begin{equation}\label{eqn:mnn}
f_l^p(x) = \sigma\left( \sum_{q=1}^{F_{l-1}} \bbh_l^{pq}(\ccalL_\rho) f_{l-1}^q(x)\right),
\end{equation}
where $f_{l-1}^q$ is an input feature and $f_l^p$, $1 \leq p \leq F_l$ is a an output feature. In each layer manifold filters maps $F_{l-1}$ input features to $F_l$ output features. To represent the MNN succinctly, we {group all learnable parametes, and we} denote the mapping as $\bbPhi(\bbH,\ccalL_\rho,f)$, where $\bbH\in\ccalH\subset\reals^P$  is a filter parameter set of the manifold filters.

\section{Convergence and Generalization Analysis of GNNs via MNNs}
\label{sec:generalization}
Suppose we are given an embedded manifold $\ccalM\subset \reals^\mathsf{N}$ with input manifold signal $f\in L^2(\ccalM)$ and target manifold signal $g\in L^2(\ccalM)$ attached to it. An MNN, as defined in \eqref{eqn:mnn}, predicts the target signal $g$ with $\bm\Phi(\bbH, \ccalL_\rho, f)$ where $\bbH\in\ccalH\subset\reals^P$ is the set of filter coefficients, $\ccalL_\rho$ is the weighted Laplacian defined in \eqref{eqn:weight-Laplace}. A positive loss function is denoted as $\ell(\bm\Phi(\bbH, \ccalL_\rho, f), g)$ to measure the estimation performance.

Suppose we are given a pair of graph signal and graph $(\bbx_N, \bbG_N)$ along with target output graph signal $\bby_N \in \reals^N$. Graph $\bbG_N$ is constructed based on a set of $N$ i.i.d. randomly sampled points $X_N=\{x_N^1, x_N^2,\cdots, x_N^{N}\}$ according to measure $\mu$ over the underlying manifold $\ccalM$. These $N$ sampled points are seen as nodes. Further, every pair of nodes $(x_N^i, x_N^j)$ is connected by an edge with weight value $[\bbW_N]_{ij}, \bbW_N\in \reals^{N \times N }$ determined by a function $K_\epsilon$ of their Euclidean distance $\|x_N^i - x_N^j\|$ \cite{calder2022improved}, which is explicitly written as 
\begin{equation}
\label{eqn:edge-weight}
[\bbW_N]_{ij} = K_\epsilon(x_N^i, x_N^j) = \frac{\alpha_d}{(d+2)N \epsilon^{d+2} } \mathbbm{1}_{[0,1]}\left( \frac{\|x_N^i - x_N^j\|}{\epsilon} \right),
\end{equation}
where $\alpha_d$ is the volume of the $d$-dimensional Euclidean unit ball and $\mathbbm{1}$ stands for a characteristic function, {i.e. the function value is $1$ if and only if $\frac{\|x_N^i - x_N^j\|}{\epsilon} $ falls in $[0,1]$}. {We call graph $\bbG_N$ as a geometric graph as it contains the geometric information embedded in the underlying manifold model.} Graph input and target output signals $\bbx_N, \bby_N\in\reals^{N }$ are supported on sampled points $X_N$ belonging to $L^2(X_N)$, whose values are sampled from manifold input signal $f$ and target signal $g$ respectively, written explicitly as
\begin{align}
\label{eqn:sample}
    [\bbx_N]_i = f(x_N^i),\quad [\bby_N]_i = g(x_N^i).
\end{align}
We define a sampling operator $\bbP_N: L^2(\ccalM)\rightarrow L^2(X_N)$ to represent the mappings in \eqref{eqn:sample}, i.e. $\bbx_N = \bbP_N f$ and $\bby_N =\bbP_N g$.
As introduced in Sec. \ref{sec:preliminary}, we denote a GNN mapping as $\bm\Phi(\bbH,\bbL_N,\bbx_N)$, where $\bbH\in\ccalH\subset\reals^P$ is the set of filter coefficients, $\bbL_N = \text{diag}(\bbW_N \mathbf{1}) - \bbW_N$ as the graph Laplacian which is implemented as a GSO, $\bbx_N$ as the input graph signal. 
\subsection{Convergence of GNNs to MNNs}
We first show that the difference between the outputs of MNN $\bm\Phi(\bbH,\ccalL_\rho, f)$ and the GNN on the sampled graph with $N$ nodes  $\bm\Phi(\bbH,\bbL_N,\bbx_N)$ can be bounded. The bound decreases with the number of nodes sampled from the manifold. 

Suppose the input manifold signal $f$ is $\lambda_M$-bandlimited, which is explicitly defined as follows.
\begin{definition}
\label{def:band}
      A manifold signal $f$ is $\lambda_M$-bandlimited if for all eigenpairs $\{\lambda_i, \bm\phi_i\}_{i=1}^\infty$ of the weighted Laplacian $\ccalL_\rho$ when $\lambda_i>\lambda_M$, we have $\langle f, \bm\phi_i \rangle_\ccalM = 0$.
\end{definition}
We denote $M$ as the cardinality of the limited spectrum of $\ccalL_\rho$, i.e. $M =\# \{\lambda_i <\lambda_M\}$. This indicates that $f$ consists of finite frequency components over the manifold without components in the high oscillation modes. We also need to put an assumption on the frequency response function of the filters as follows.
\begin{definition}
\label{def:filter}
A filter is a low pass filter if its frequency response satisfies 
\begin{equation}
        \label{eqn:low-pass}
        \left|\hat{h}(a)\right| =\ccalO\left(a^{-d}\right),\quad a\in(0,\infty).
    \end{equation}
\end{definition}
The low pass filter assumption \eqref{eqn:low-pass} is reasonable given that most real-world graph processes are low frequency \cite{ramakrishna2020user}.
For the nonlinearity functions utilized in the GNNs, we need to make the following assumption.
\begin{assumption}(Normalized Lipschitz nonlinearity)\label{ass:activation}
 The nonlinearity $\sigma$ is normalized Lipschitz continuous, i.e., $|\sigma(a)-\sigma(b)|\leq |a-b|$, with $\sigma(0)=0$.
\end{assumption}
We note that this assumption is reasonable considering most common activation functions used in practice are normalized Lipschitz, such as ReLU, modulus and sigmoid.


With the low-pass filters and normalized Lipschitz non-linearities in both the GNN and MNN, we are ready to prove a difference bound between the outputs of a GNN operating on a sampled graph over $\ccalM$ and the outputs of an MNN on $\ccalM$.
\begin{proposition}
\label{prop:prob-diff}
    Let $\ccalM\subset \reals^\mathsf{N}$ be an embedded manifold with weighted Laplace operator $\ccalL_\rho$ and a $\lambda_M$-bandlimited manifold signal $f$. Consider a pair of graph and graph signal $(\bbx_N, \bbG_N)$ with $N$ nodes sampled i.i.d. with measure $\mu$ over $\ccalM$. The graph Laplacian $\bbL_N$ is calculated with \eqref{eqn:edge-weight}. Let $\bm\Phi(\bbH,\ccalL_\rho,\cdot)$ be a single layer MNN on $\ccalM$ \eqref{eqn:mnn} with single input and output features. Let $\bm\Phi(\bbH,\bbL_N,\cdot)$ be the GNN with the same architecture applied to the graph $\bbG_N$. Then, with the filters as low-pass and nonlinearities as normalized Lipschitz continuous, it holds in probability at least $1-\delta$ that 
    \begin{align}
         \label{eqn:prob-diff}
    &\nonumber\|\bm\Phi(\bbH,\bbL_N, \bbP_N f) - \bbP_N\bm\Phi(\bbH, \ccalL_\rho ,f ) \|_2
\leq \\ \nonumber
&    C_1 \left(\frac{\log\frac{C_1N}{\delta}}{N} \right)^{\frac{1}{d+4}} +  C_2\left(\frac{\log\frac{C_1N}{\delta}}{N} \right)^{\frac{1}{d+4}} \theta_M^{-1} \\
    &\qquad \qquad \qquad \qquad\qquad   + C_3 \sqrt{\frac{\log(1/\delta)}{N}} + C_4 M^{-1},
    \end{align}
    where $C_1$, $C_2$, $C_3$ and $C_4$ are constants defined in Appendix \ref{app:prop1} and $\theta_M = \min_{i=1,2\cdots, M} |\lambda_i -\lambda_{i+1}|$.
\end{proposition}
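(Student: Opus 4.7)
The plan is to reduce the GNN--MNN output gap to a filter-level difference via the nonlinearity, and then to estimate that filter difference through the spectral convergence of the $\epsilon$-graph Laplacian $\bbL_N$ to the weighted Laplacian $\ccalL_\rho$. By Assumption \ref{ass:activation} with $\sigma(0)=0$, the normalized Lipschitz property lets us pull the pointwise nonlinearity outside and pass to the output feature before the activation, so it suffices to bound
\[
\|\bbh(\bbL_N)\bbP_N f - \bbP_N \bbh(\ccalL_\rho) f\|_2,
\]
where $\bbh(\cdot)$ is the single low-pass filter given by the frequency response $\hat{h}$.

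I would then expand both filter outputs in their respective eigenbases $\{\bm\phi_i\}$ (for $\ccalL_\rho$) and $\{\bm\phi_i^N\}$ (for $\bbL_N$). Because $f$ is $\lambda_M$-bandlimited, on the continuous side only the first $M$ modes contribute; on the discrete side I would split the sum at index $M$. The tail $i>M$ is absorbed by the low-pass condition in Definition \ref{def:filter}: combining $|\hat h(\lambda)|=\ccalO(\lambda^{-d})$ with Weyl-type growth of the Laplacian eigenvalues $\lambda_i\asymp i^{2/d}$ yields a tail of order $M^{-1}$, which accounts for the $C_4 M^{-1}$ term. The remaining first $M$ terms are handled via triangle inequality against the continuous filter's spectral expansion evaluated at the sampled points, yielding three contributions: (i) eigenvalue perturbation $|\lambda_i-\lambda_i^N|$, smoothed by Lipschitz continuity of $\hat h$; (ii) eigenfunction perturbation $\|\bm\phi_i^N-\bbP_N\bm\phi_i\|_2$ controlled by a Davis--Kahan argument with the spectral gap $\theta_M$; and (iii) a sampling error $\|\bbP_N(\hat h(\lambda_i)\bm\phi_i)-\hat h(\lambda_i)\bbP_N\bm\phi_i\|_2$ along with $|\langle f,\bm\phi_i\rangle_\ccalM - \langle \bbP_N f,\bbP_N\bm\phi_i\rangle|$, which are Monte Carlo averages giving the $\sqrt{\log(1/\delta)/N}$ concentration term.

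The technical heart is the non-asymptotic spectral convergence of the $\epsilon$-graph Laplacian built with kernel \eqref{eqn:edge-weight}. I would invoke the high-probability bound of \cite{calder2022improved}: after optimizing the bandwidth $\epsilon$ in terms of $N$, one obtains $|\lambda_i-\lambda_i^N|$ and $\|\bm\phi_i^N-\bbP_N\bm\phi_i\|_2$ of order $((\log(N/\delta))/N)^{1/(d+4)}$, where the eigenfunction bound carries an additional $\theta_M^{-1}$ factor from the separation of the $M$-th eigenvalue. These two contributions produce the $C_1$ and $C_2$ terms, respectively.

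The hard part, and the place where the spectral-gap factor enters, is the eigenfunction perturbation. Eigenvalues are robust but eigenfunctions can rotate within clusters of close eigenvalues, forcing the $\theta_M^{-1}$ dependence; handling this carefully, and coupling it with the Lipschitz smoothness of $\hat h$ on the truncated spectrum so that no additional $M$-dependent blow-up appears, is the main obstacle. Everything else---the Lipschitz propagation through $\sigma$, the tail estimate from the low-pass hypothesis, and the Hoeffding-type concentration of $\bbP_N$---is largely mechanical once the spectral perturbation bound is in place. Collecting all four contributions and a final union bound over the $M$ eigenpairs and the sampling event (absorbed into the constants $C_1,\ldots,C_4$ defined in Appendix \ref{app:prop1}) yields \eqref{eqn:prob-diff}.
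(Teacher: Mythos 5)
Your proposal follows essentially the same route as the paper's proof: reduce to the filter-level difference via the normalized Lipschitz nonlinearity, split the spectral expansion at the bandlimit $M$, bound the tail by the low-pass decay combined with Weyl's law to get the $M^{-1}$ term, and control the first $M$ modes by the eigenvalue/eigenvector perturbation bounds of \cite{calder2022improved} (with the $\theta_M^{-1}$ factor from the eigenfunction estimate) plus the Monte Carlo concentration of the sampled inner products. The decomposition, the key external lemma, and the origin of each of the four terms all match the paper's argument.
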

\begin{corollary}
\label{cor:converge}
    The difference of the GNN $\bm\Phi(\bbH,\bbL_N,\cdot)$ on a graph $\bbG_N$ sampled from the manifold $\ccalM$ and MNN $\bm\Phi(\bbH,\ccalL_\rho,\cdot)$ converges to zero as $N$ goes to infinity {when the input signal has large enough bandwidth}.
\end{corollary}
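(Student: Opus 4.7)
The plan is to apply Proposition \ref{prop:prob-diff} and show that each of the four terms on the right-hand side of \eqref{eqn:prob-diff} can be driven to zero, either automatically (the terms that decay as negative powers of $N$) or by letting the bandlimit parameter $M$ grow slowly with $N$ (the residual $C_4 M^{-1}$ term).

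First I would fix an arbitrary confidence level $\delta \in (0,1)$ and, for the moment, hold $M$ fixed. Under this condition, the first three terms in \eqref{eqn:prob-diff} all vanish as $N \to \infty$: the terms $C_1(\log(C_1 N/\delta)/N)^{1/(d+4)}$ and $C_2(\log(C_1 N/\delta)/N)^{1/(d+4)}\theta_M^{-1}$ decay as a negative power of $N$ up to a logarithmic factor, and $C_3\sqrt{\log(1/\delta)/N}$ decays as $N^{-1/2}$, since both $\theta_M$ and the constants $C_1,C_2,C_3$ are independent of $N$.

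The second step is to eliminate the surviving truncation term $C_4 M^{-1}$. For this I would allow $M$ to grow with $N$, choosing a schedule $M = M(N)$ such that $M(N)\to\infty$ while preserving the decay of the first three terms. The critical point is that $\theta_{M(N)}^{-1}$ may grow as $M(N)$ grows, since eigenvalues of $\ccalL_\rho$ cluster by Weyl's law on a compact manifold. Because the spectrum is discrete with $\theta_M > 0$ for every finite $M$, one can always pick $M(N)$ increasing slowly enough (for instance logarithmically, or slower) so that
\begin{equation}
\left(\frac{\log(C_1 N/\delta)}{N}\right)^{\frac{1}{d+4}}\theta_{M(N)}^{-1} \;\longrightarrow\; 0, \qquad M(N)^{-1} \;\longrightarrow\; 0.
\end{equation}
With this coupled choice, all four terms in \eqref{eqn:prob-diff} tend to zero.

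The main obstacle in carrying this out rigorously is the calibration of $M(N)$ against the spectrum-dependent factor $\theta_M^{-1}$: the bound only vanishes if the growth of $M(N)$ is matched to the rate at which the minimum consecutive eigengap among the first $M$ eigenvalues shrinks, which is a property of the manifold geometry. Once this schedule is fixed, the probabilistic statement of Proposition \ref{prop:prob-diff} holds with probability at least $1-\delta$; since $\delta$ is arbitrary, we conclude that $\|\bm\Phi(\bbH,\bbL_N,\bbP_N f) - \bbP_N\bm\Phi(\bbH,\ccalL_\rho,f)\|_2 \to 0$ in probability as $N\to\infty$, which is the content of the corollary.
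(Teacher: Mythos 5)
Your argument is correct and follows essentially the same route as the paper: both apply Proposition \ref{prop:prob-diff} and observe that the first three terms vanish as $N\to\infty$ for any fixed $M$ while the fourth vanishes as $M\to\infty$. The only difference is that the paper uses a sequential $\delta/2$ argument---first fixing $M_0$ with $C_4 M_0^{-1}<\delta/2$, then taking $N$ large enough that the remaining terms (with the now-constant factor $\theta_{M_0}^{-1}$) fall below $\delta/2$---which sidesteps the coupled schedule $M(N)$ and the calibration against $\theta_{M}^{-1}$ that you flag as the main obstacle.
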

\begin{proof}
    We denote the four terms in \eqref{eqn:prob-diff} as $A_1(N)$, $A_2(M,N)$, $A_3(N)$ and $A_4(M)$. For every $\delta>0$, we can choose some $M_0$ such at $A_4(M_0)<\delta /2$. There exists some $n_0$ such that for all $N>n_0$, $A_1(N)+A_2(M_0,N)+A_3(N)<\delta/2$. Therefore, this satisfies the definition of the convergence, which implies that for every $\delta>0$, there exists some $n_0$ so that for all $N>n_0$, we have $A_1(N)+ A_2(M_0,N)+A_3(N)+A_4(M_0)<\delta$. 
\end{proof}
Proposition \ref{prop:prob-diff} shows that the output difference of GNN on the sampled graph and the underlying MNN is bounded in high probability with the bound decreasing with the number of sampled points $N$, i.e. the number of nodes in the graph. According to the first and the third terms, the generalization gap decreases when the number of nodes $N$ increases and the dimension of the manifold $d$ decreases. This is reasonable as low-dimensional manifold is a model with low complexity and more sampled points from the manifold provide a better approximation. The second and the fourth terms also reflect the relationship with the bandwidth of the input manifold signal, with $\theta_M^{-1}$ increasing with $M$ according to Weyl's law \cite{arendt2009weyl} and $M^{-1}$ decreasing with $M$. In the second term, though $\theta_M^{-1}$ increases with $M$, a large enough $N$ can still make this term controllable. As Corollary \ref{cor:converge} shows, even as the spectrum becomes larger, there always exists some $N$ large enough to make the difference go to zero as $N$ increases. This attests to the convergence of GNN on the sampled graph to MNN on the underlying manifold. As this difference bound holds in high probability, we can naturally derive the difference bound in expectation of $N$ randomly sampled points as follows.
\begin{corollary}
\label{cor:expect-diff}
    The difference bound between GNN and MNN also holds in expectation since each node in $X_N$ is sampled i.i.d. according to measure $\mu$ over $\ccalM$
    \begin{align}
    \label{eqn:expect-diff}
        &  \mathop{\mathbb{E}}_{X_N \sim \mu^N} \left[\|\bm\Phi(\bbH,\bbL_N, \bbP_N f) - \bbP_N\bm\Phi(\bbH, \ccalL_\rho ,f ) \|_2\right] \leq \\\nonumber  &C' N^{-\frac{1}{d+4}} + C''N^{-\frac{1}{2}} +C'''\left(\frac{\log N}{N}\right)^{\frac{1}{d+4}} + \bar M e^{-N/C}\sqrt{N},
    \end{align}
    where $C'$, $C''$, $C'''$ and $\bar M$ are specified in Appendix \ref{app:cor2}.
\end{corollary}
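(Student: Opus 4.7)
The plan is to convert Proposition~\ref{prop:prob-diff}'s high-probability bound into an expectation bound by combining a deterministic worst-case fallback for the tail of $Y:=\|\bm\Phi(\bbH,\bbL_N,\bbP_N f)-\bbP_N\bm\Phi(\bbH,\ccalL_\rho,f)\|_2$ with term-by-term tail integration of the bound of Proposition~\ref{prop:prob-diff} on the good event.

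\textbf{Worst-case fallback and event split.} By the triangle inequality, $Y\le\|\bm\Phi(\bbH,\bbL_N,\bbP_N f)\|_2+\|\bbP_N\bm\Phi(\bbH,\ccalL_\rho,f)\|_2$. Since the low-pass frequency response is bounded, the Lipschitz nonlinearity satisfies $\sigma(0)=0$, and the $\lambda_M$-bandlimited $f$ is bounded in $L^\infty(\ccalM)$, each summand is $O(\sqrt N)$, giving $Y\le\bar M\sqrt N$ deterministically for some $\bar M$ depending on $\|f\|_\infty$, $M$, and the architecture. Letting $\ccalA$ denote the dominant concentration event in the proof of Proposition~\ref{prop:prob-diff} (the one controlling the $\epsilon$-graph Laplacian's convergence, with $\Pr{\ccalA^c}\le e^{-N/C}$), the bad-event contribution $\mathbb{E}[Y\,\mathbbm{1}_{\ccalA^c}]\le \bar M\sqrt N\cdot e^{-N/C}$ produces the fourth summand of \eqref{eqn:expect-diff}.

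\textbf{Tail integration on the good event.} Conditionally on $\ccalA$, the proof of Proposition~\ref{prop:prob-diff} decomposes $Y\le X_1+X_2+X_3+C_4 M^{-1}$, with $X_1,X_2$ contributing the $(\log(C_1N/\delta)/N)^{1/(d+4)}$ summands (the latter carrying a $\theta_M^{-1}$ prefactor) and $X_3$ the sub-Gaussian $\sqrt{\log(1/\delta)/N}$ summand. Using $\mathbb{E}[X]=\int_0^\infty \Pr{X>t}\,dt$: (i) the Gaussian tail of $X_3$ integrates to $O(N^{-1/2})$, giving the $C''N^{-1/2}$ term; (ii) for $X_1,X_2$, inverting the tail bound yields $\Pr{X_i>t}\le C_1 N\exp(-Nt^{d+4}/C_1^{d+4})$, which is trivially $\le 1$ below the threshold $t_\star=C_1(\log(C_1N)/N)^{1/(d+4)}$; the threshold itself contributes the $C'''(\log N/N)^{1/(d+4)}$ term, while the residual integral $\int_{t_\star}^{\bar M\sqrt N} C_1 N e^{-Nt^{d+4}/C_1^{d+4}}\,dt$, evaluated via the substitution $u=Nt^{d+4}/C_1^{d+4}$ and the incomplete-gamma estimate, contributes the $C'N^{-1/(d+4)}$ term. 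The deterministic $C_4 M^{-1}$ and the $\theta_M^{-1}$ prefactor are absorbed into the constants $C',C''',\bar M$.

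\textbf{Main obstacle.} The central subtlety is producing the two distinct rates $(\log N/N)^{1/(d+4)}$ and $N^{-1/(d+4)}$ from the same Proposition~\ref{prop:prob-diff} summand: the former arises from the threshold $t_\star$ at which the tail probability ceases to exceed $1$, while the latter arises from the residual tail past $t_\star$. A single choice of $\delta$ plugged directly into \eqref{eqn:prob-diff}, such as $\delta=e^{-N/C}$, would either lose the $N$-decay in the leading summands (since $\log(1/\delta)$ would scale like $N$) or fail to kill the $\bar M\sqrt N$ worst-case bound; only the piecewise tail integration separates the two rates cleanly while keeping the exponentially small bad-event residual.
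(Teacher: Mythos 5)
Your proposal is correct and follows essentially the same route as the paper: both convert the high-probability bound of Proposition~\ref{prop:prob-diff} into an expectation via tail integration (the paper discretizes the layer-cake formula into shells indexed by $k$ with $\delta=e^{-k^2}$, which is equivalent to your continuous inversion $\Pr{X>t}\le C_1N\exp(-Nt^{d+4}/C_1^{d+4})$), and both use a deterministic worst-case bound on the output norm to control the regime where the probability bound is vacuous, yielding the $\bar M e^{-N/C}\sqrt N$ residual. Your identification of the two distinct rates $(\log N/N)^{1/(d+4)}$ and $N^{-1/(d+4)}$ as arising from the threshold $t_\star$ and the residual tail integral, respectively, matches exactly how they emerge in the paper's shell sum from the $\log N$ and $(k+1)^2$ parts of $(\log N+(k+1)^2)^{1/(d+4)}$.
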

We can observe that the expected output difference decreases with $N$ -- the number of nodes sampled from the underlying manifold, while increasing with $d$ -- the manifold dimension, i.e. the model complexity.  Equipped with these convergence results both in high probability and in expectation, we are ready to derive the bound of generalization gap of GNNs.

\subsection{Generalization of GNNs} 
Suppose a GNN is trained over the given graph $\bbG_N$ and graph signal $\bbx_N$. { A positive loss function $\ell$ is used to measure the estimation performance.  We assume that the loss function is normalized Lipschitz continuous defined as follows}
{
\begin{assumption}
    (Normalized Lipschitz loss function)\label{ass:loss}
 The loss function $\ell$ is normalized Lipschitz continuous, i.e., $|\ell( \bby_i, \bby)-\ell( \bby_j, \bby)|\leq \|\bby_i- \bby_j\|_2$, with $\ell(\bby, \bby)=0$.
\end{assumption}
}
The training of GNNs intends to minimize the empirical risk $\PEmpRiskGraph =\min\limits_{\bbH\in \ccalH}  \EmpRiskGraph (\bbH)$ with $\EmpRiskGraph (\bbH)$ defined as 
{\begin{align}\label{P:empirical_risk_graph}
    \EmpRiskGraph (\bbH)  := \ell(\bm\Phi(\bbH, \bbL_{N},\bbx_{N}),\bby_{N}).
\end{align}}

Given the fact that we only have access to a training set with limited training samples and not the underlying probability distribution $\mu$, nor the manifold $\ccalM$, we can only aim to minimize the empirical risk in practice. Substantially, the goal of the GNN is to minimize the statistical risk $\PStatRiskGraph  = \min \limits_{\bbH\in \ccalH}  \StatRiskGraph (\bbH)$, with $\StatRiskGraph (\bbH)$ defined as
{\begin{align}\label{P:statistical_risk_graph}
    \StatRiskGraph (\bbH) :=   \mathbb{E}_{X_{N}\sim\mu^N}\left[ \ell\left( \bm\Phi(\bbH, \bbL_{N}, \bbx_{N}), \bby_{N}\right)\right],
\end{align}}where the expectation is taken with respect to $N$ randomly sampled points $X_{N}\sim \mu^{N}$.

The generalization gap of GNN composed of low-pass filters and normalized Lipschitz nonlinearities, which is in the same setting as Proposition \ref{prop:prob-diff} is defined to be 
\begin{align}
\label{eqn:ga}
    \GeneralizationGap = \PStatRiskGraph -\PEmpRiskGraph ,
\end{align}
which measures the difference between the optimal empirical risk and the optimal statistical risk of the GNN over the sampled graph from the manifold. {We note that by defining it in this way, the actual training loss, reflected by $P_E^*$, can be used to upper bound $P_S^*$. This provides a performance guarantee of the GNN over the whole dataset. }
Based on the convergence results that we have derived, we can bound the generalization gap in the following theorem. 
\begin{theorem}\label{thm:graph_geneneralization_gap}
    Suppose we have a GNN trained on $(\bbx_{N},\bbG_{N})$  with $N$ nodes sampled i.i.d. with measure $\mu$ over $\ccalM$, the generalization gap  GA is bounded in probability at least $1-\delta$ that, 
    \begin{align}
    {GA = \ccalO\left( \left(\frac{\log \frac{N}{\delta}}{N}\right)^{\frac{1}{d+4}} \right).}
\end{align}
\end{theorem}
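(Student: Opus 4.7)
The plan is to use the deterministic manifold risk
\begin{equation*}
R^{\ccalM}(\bbH)\;:=\;\ell\bigl(\bm\Phi(\bbH,\ccalL_\rho,f),\,g\bigr)
\end{equation*}
as an anchor that does not depend on the random sample $X_N$, and then compare $R_E$ and $R_S$ against it by invoking Proposition \ref{prop:prob-diff} and Corollary \ref{cor:expect-diff} respectively. Since $R^{\ccalM}$ is a single deterministic number for each $\bbH$, both the fluctuation $R_E - R^{\ccalM}$ and the bias $R_S - R^{\ccalM}$ reduce to already-proved GNN-to-MNN convergence statements.

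First I would replace the gap of optima by a uniform supremum. Writing $\bbH_S^\star\in\argmin_{\bbH\in\ccalH} R_S(\bbH)$ and $\bbH_E^\star\in\argmin_{\bbH\in\ccalH} R_E(\bbH)$, the optimality inequalities $R_S(\bbH_S^\star)\leq R_S(\bbH_E^\star)$ and $R_E(\bbH_E^\star)\leq R_E(\bbH_S^\star)$ together yield
\begin{equation*}
|GA|\;=\;\bigl|R_S(\bbH_S^\star)-R_E(\bbH_E^\star)\bigr|\;\leq\;\sup_{\bbH\in\ccalH}\bigl|R_S(\bbH)-R_E(\bbH)\bigr|.
\end{equation*}
Inserting $R^{\ccalM}$ via the triangle inequality reduces the task to bounding $|R_E(\bbH)-R^{\ccalM}(\bbH)|$ and $|R_S(\bbH)-R^{\ccalM}(\bbH)|$ uniformly in $\bbH\in\ccalH$.

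For the empirical-versus-manifold comparison, the $L^2$ loss is $1$-Lipschitz in its first argument, so the reverse triangle inequality bounds $|R_E(\bbH)-R^{\ccalM}(\bbH)|$ by $\|\bm\Phi(\bbH,\bbL_N,\bbP_N f)-\bbP_N\bm\Phi(\bbH,\ccalL_\rho,f)\|_2$ plus the residual $\bigl|\|\bbP_N\bm\Phi(\bbH,\ccalL_\rho,f)-\bbP_N g\|_2-\|\bm\Phi(\bbH,\ccalL_\rho,f)-g\|_{\ccalM}\bigr|$. Proposition \ref{prop:prob-diff} controls the first summand with the required $(\log(N/\delta)/N)^{1/(d+4)}$ rate with probability $1-\delta$, and the second is a standard Hoeffding-type discrepancy between the sampled $L^2$ norm and the continuous $L^2(\ccalM)$ norm, of order $N^{-1/2}$, which is absorbed by the leading term. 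For the statistical-versus-manifold comparison, since $R_S(\bbH)=\mathbb E_{X_N\sim\mu^N}[R_E(\bbH)]$ while $R^{\ccalM}(\bbH)$ is deterministic, Jensen's inequality combined with the same Lipschitz reduction turns the right-hand side of Corollary \ref{cor:expect-diff} directly into a bound on $|R_S(\bbH)-R^{\ccalM}(\bbH)|$ of order $(\log N/N)^{1/(d+4)}$. A final union bound combines the two contributions into the claimed rate.

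The main obstacle will be upgrading the pointwise-in-$\bbH$ statements of Proposition \ref{prop:prob-diff} and Corollary \ref{cor:expect-diff} to a bound that holds uniformly over the parameter set $\ccalH$. I plan to address this by first verifying that the constants $C_1,\dots,C_4$ in \eqref{eqn:prob-diff} depend continuously on the filter coefficients and the filter bandwidth, so that on any bounded $\ccalH$ they may be replaced by a single worst-case constant; should that fail, a standard $\epsilon$-net argument together with Lipschitz-in-parameters control of the GNN map promotes the pointwise bound to a uniform one at the cost of only a $\log(1/\epsilon)$ factor, which is absorbed into the $\log N$ factor already present in the rate. A secondary technical point that must be tracked carefully is the normalization mismatch between the Euclidean norm on $\reals^N$ and the continuous $L^2(\ccalM)$ norm, so that the Monte-Carlo step comparing sampled and manifold losses does not introduce extra dimension dependence into the final exponent $1/(d+4)$.
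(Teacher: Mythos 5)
Your proposal follows essentially the same route as the paper: anchor both risks at the deterministic manifold loss $\ell(\bm\Phi(\bbH,\ccalL_\rho,f),g)$, control the empirical side with Proposition \ref{prop:prob-diff} plus the sampled-versus-continuous norm discrepancy, and control the statistical side with Corollary \ref{cor:expect-diff}. The only substantive difference is that the paper bounds $GA$ one-sidedly at the empirical minimizer $\bbH_E$ rather than via a uniform supremum over $\ccalH$; your attention to uniformity is reasonable but turns out to be automatic here, since the constants in \eqref{eqn:prob-diff} and the underlying high-probability events (spectral convergence of $\bbL_N$ and the sampling inequality) do not depend on $\bbH$.
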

We observe that the generalization gap $GA$ defined as \eqref{eqn:ga} decreases with the number of sampled points over the manifold $N$. 
Unlike previous results, we do not place any restrictions on the VC dimension or Rademacher complexities of the functions. We require that the graphs be sampled from the same underlying manifold, . Furthermore, the generalization gap increases with the manifold dimension $d$. That is to say, for a higher manifold complexity, we need to sample more points to achieve a satisfying generalization gap. {While in most practical scenarios, manifolds are often low-dimensional, especially those associated with physical systems or image manifolds, as demonstrated in \cite{talmon2015manifold, osher2017low, van2008visualizing}.}

\myparagraph{Remark 1} This conclusion can also be extended to multi-layer and multi-feature neural network architectures, as the neural network is cascaded by layers of filters and nonlinearities. The generalization gap may propagate across layers which indicates the increasing of the generalization gap of multi-layer and multi-feature GNNs with the size of the architecture, which we will further verify in Section \ref{sec:sim}.

{\myparagraph{Remark 2} The analysis of the generalization of GNNs on node classification tasks can be extended to graph classification tasks if we consider a linear combination of the node-wise outputs of the GNNs. The empirical risk $\EmpRiskGraph$ and the statistical risk $\StatRiskGraph$ are defined as the summation of loss functions over a set of graphs, with each graph sampled from a manifold within a manifold set.}

\subsection{Generalization Gap and Number of Nodes Relationship}

A salient consequence of Theorem \ref{thm:graph_geneneralization_gap} is that there exists a linear relationship in the logarithmic scale between the number of nodes in the graph and the generalization gap. To make this relationship apparent, we can compute the logarithm of the generalization gap as follows, 
\begin{align}\label{eqn:log_relationship}
    \texttt{log}(GA)\approx -a \texttt{log}(N)+b.
\end{align}
Where $b$ depends on $N$ in a $\log\log(N)$ relationship making it small in comparison with $\log(N)$. Therefore, Theorem \ref{thm:graph_geneneralization_gap} implies that the generalization gap decreases linearly (given that $a$ is a constant factor that depends on the parameters of the problem) with the logarithm of the number of nodes in the training set. 

Note that the bound obtained in \ref{thm:graph_geneneralization_gap} is an upper bound on the generalization gap. That is to say, the decrease in the generalization gap is at most linear with respect to the number of nodes in the training set. To empirically validate the result in Theorem \ref{thm:graph_geneneralization_gap}, we can compute the difference between the empirical and statistical risks of a trained GNN by training a GNN on a set of nodes and evaluating it on another set. In what follows, we carried out this experiment in two real-world datasets showcasing the validity of our theoretical result.

\section{Simulations}
\label{sec:sim}

\begin{figure*}
    \centering
    \includegraphics[width=\textwidth]{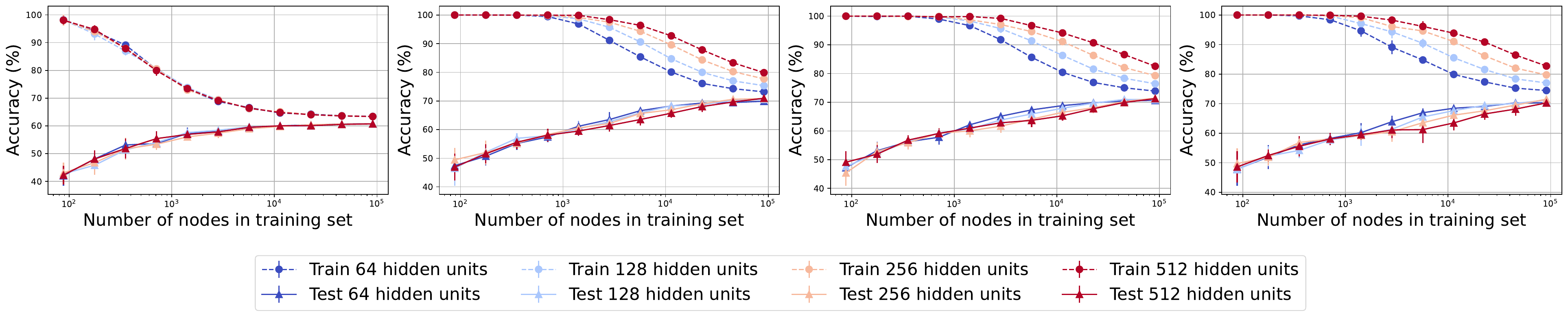}
    \includegraphics[width=\textwidth]{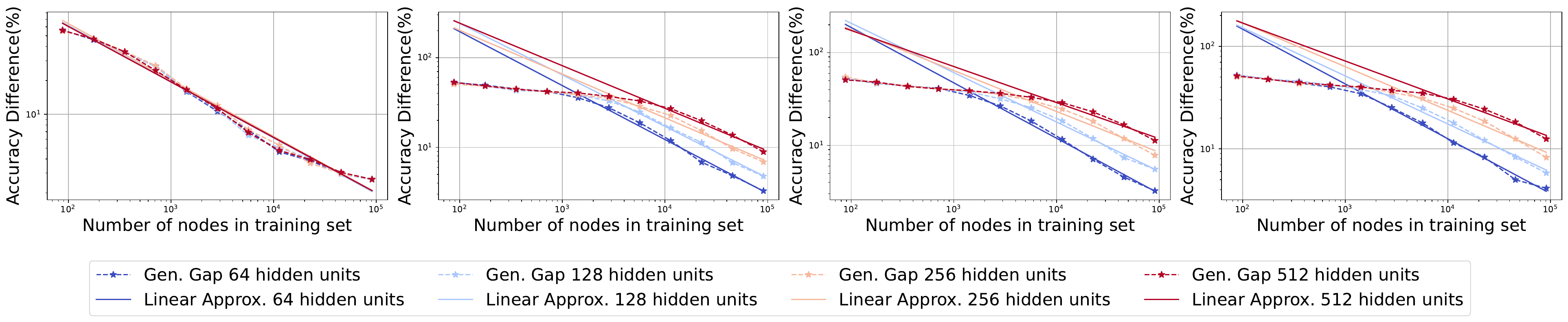}
    \begin{subfigure}{0.24\textwidth}
        \caption{One Layer}
    \end{subfigure}
    \begin{subfigure}{0.24\textwidth}
        \caption{Two Layers}
    \end{subfigure}
    \begin{subfigure}{0.24\textwidth}
        \caption{Three Layers}
    \end{subfigure}
    \begin{subfigure}{0.24\textwidth}
        \caption{Four Layers}
    \end{subfigure}
    \caption{{ GNN with $3$ layers and varying numbers of hidden units $\{64,128,256,512\}$ with a varying number of nodes in the training set trained in the
    Arxiv dataset. In the top row, we plot the accuracy in the training and testing sets as a function of the number of nodes in the training set for the Arxiv dataset. In the bottom row, we plot the accuracy difference and the best linear fit. For the linear approximation, we consider the points whose training accuracy is below $98\%$.  }}
    \label{fig:acc_arxiv}
\end{figure*}

\begin{figure*}
    \centering
    \includegraphics[width=\textwidth]{ 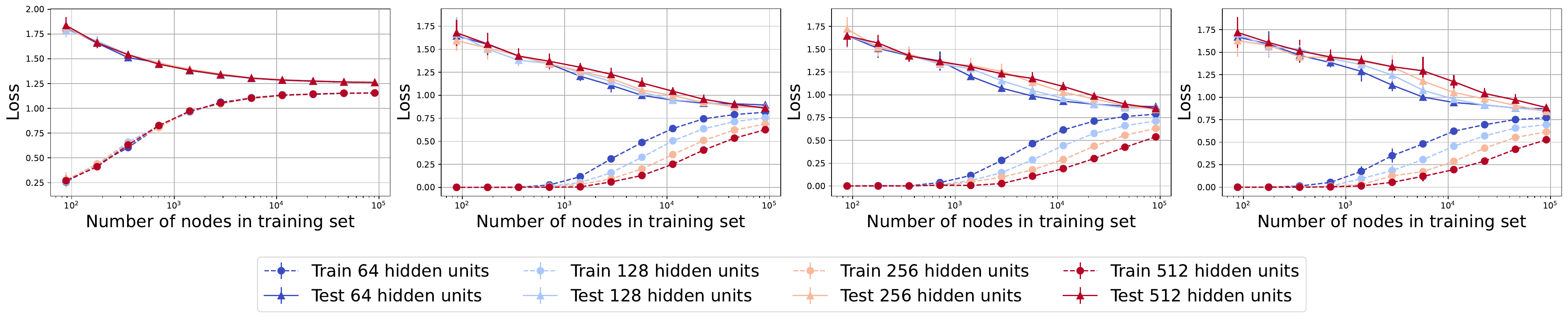}
    \includegraphics[width=\textwidth]{ 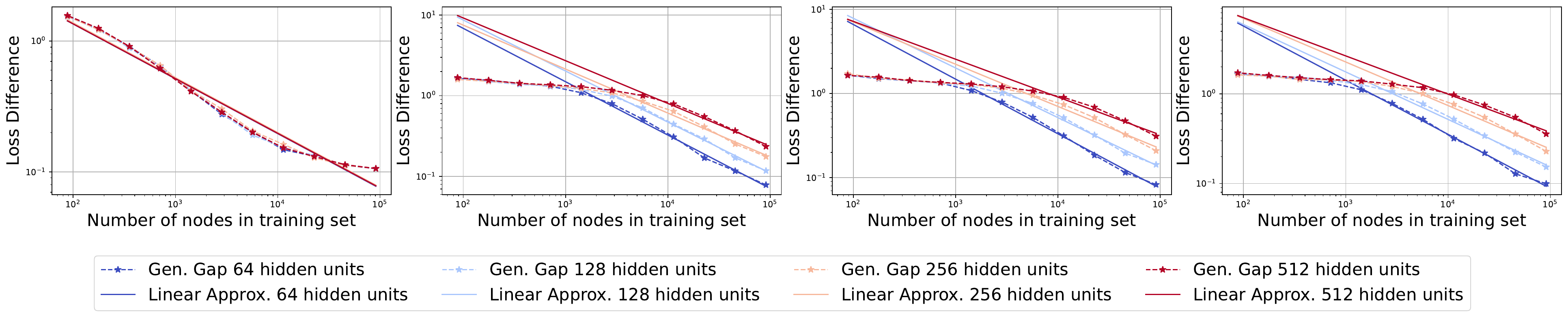}
    \begin{subfigure}{0.24\textwidth}
        \caption{One Layer}
    \end{subfigure}
    \begin{subfigure}{0.24\textwidth}
        \caption{Two Layers}
    \end{subfigure}
    \begin{subfigure}{0.24\textwidth}
        \caption{Three Layers}
    \end{subfigure}
    \begin{subfigure}{0.24\textwidth}
        \caption{Four Layers}
    \end{subfigure}
    \caption{{ GNN with $3$ layers and varying numbers of hidden units $\{64,128,256,512\}$ with a varying number of nodes in the training set trained in the
    Arxiv dataset. In the top row, we plot the loss in the training and testing sets as a function of the number of nodes in the training set for the Arxiv dataset. In the bottom row, we plot the accuracy difference and the best linear fit. For the linear approximation, we consider the points whose training accuracy is below $98\%$.  }}
    \label{fig:loss_arxiv}
\end{figure*}

\begin{figure*}
    \centering
    \includegraphics[width=\textwidth]{ 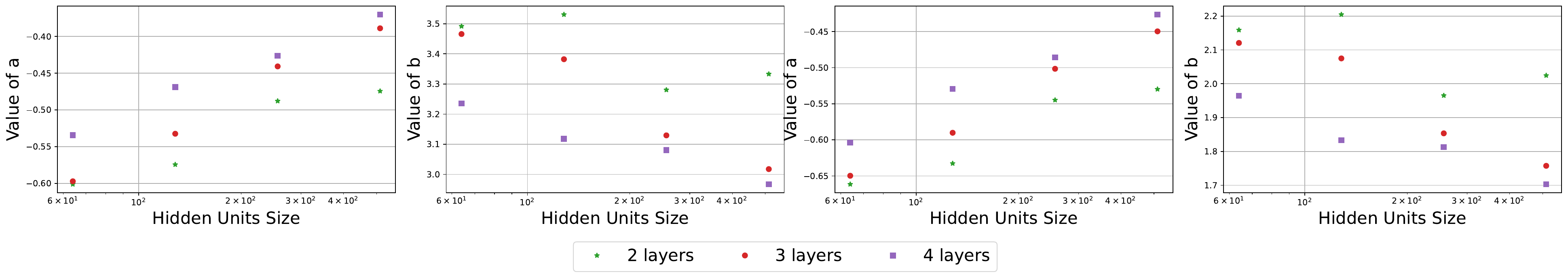}
    \begin{subfigure}{0.49\textwidth}
        \caption{Accuracy}
    \end{subfigure}
    \begin{subfigure}{0.49\textwidth}
        \caption{Loss}
    \end{subfigure}
    \caption{Values of slope (a) and point (b) corresponding to the linear fit ($a*\texttt{log}(N)+b$) of Figures \ref{fig:acc_arxiv} and \ref{fig:loss_arxiv}.  } 
    \label{fig:a_b_arxiv}
\end{figure*}

\begin{figure*}
    \centering
    \includegraphics[width=\textwidth]{ 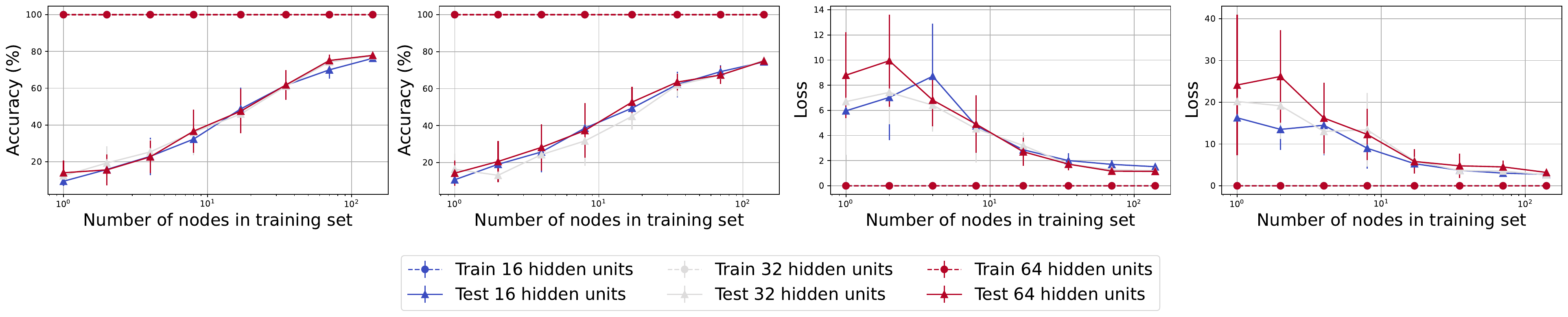}
    \includegraphics[width=\textwidth]{ 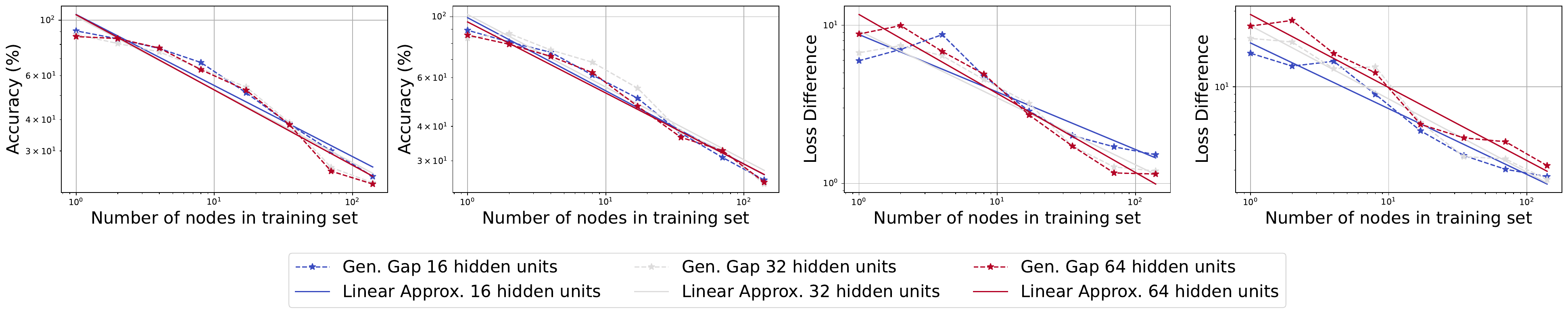}
    \begin{subfigure}{0.24\textwidth}
        \caption{Two Layers}
    \end{subfigure}
    \begin{subfigure}{0.24\textwidth}
        \caption{Three Layers}
    \end{subfigure}
    \begin{subfigure}{0.24\textwidth}
        \caption{Two Layers}
    \end{subfigure}
    \begin{subfigure}{0.24\textwidth}
        \caption{Three Layers}
    \end{subfigure}
    \caption{{ Cora dataset: GNN with $\{2,3\}$ layers and varying numbers of hidden units $\{16,32,64\}$ with a varying number of nodes in the training set trained in the Cora dataset. 
    In the top row, we plot the accuracy/loss in the training and testing sets as a function of the number of nodes in the training set for the Cora dataset. In the bottom row, we plot the accuracy difference and the best linear fit. }}
    \label{fig:acc_loss_cora}
\end{figure*}
\begin{figure*}
    \centering
    \includegraphics[width=\textwidth]{ 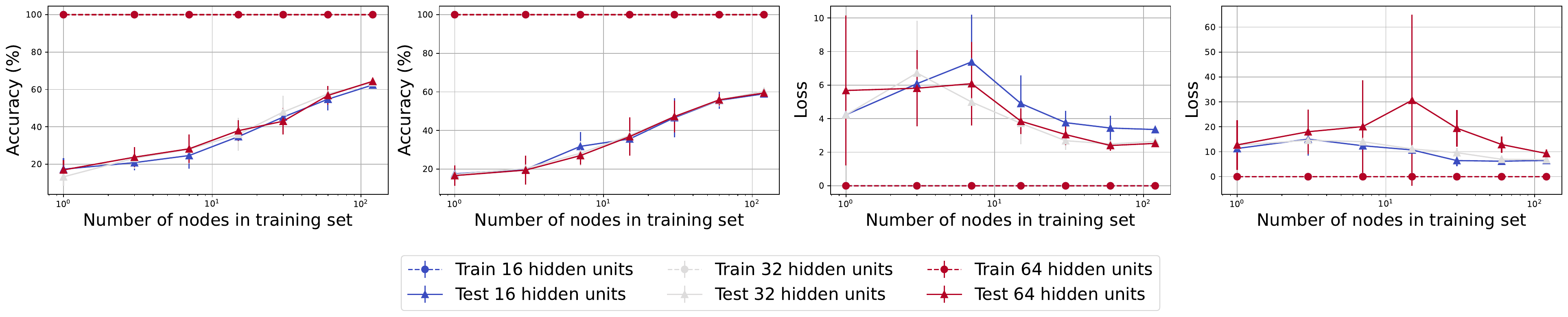}
    \includegraphics[width=\textwidth]{ 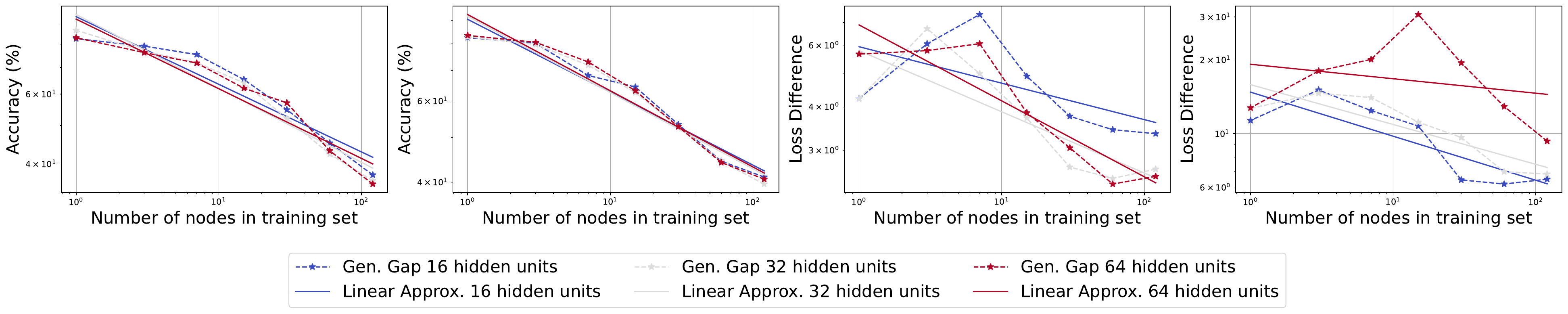}
    \begin{subfigure}{0.24\textwidth}
        \caption{Two Layers}
    \end{subfigure}
    \begin{subfigure}{0.24\textwidth}
        \caption{Three Layers}
    \end{subfigure}
    \begin{subfigure}{0.24\textwidth}
        \caption{Two Layers}
    \end{subfigure}
    \begin{subfigure}{0.24\textwidth}
        \caption{Three Layers}
    \end{subfigure}
    \caption{{CiteSeer dataset: GNN with $\{2,3\}$ layers and varying numbers of hidden units $\{16,32,64\}$ with a varying number of nodes in the training set trained in the
    CiteSeer dataset. In the top row, we plot the accuracy/loss in the training and testing sets as a function of the number of nodes in the training set for the Cora dataset. In the bottom row, we plot the accuracy difference and the best linear fit. }}
    \label{fig:acc_loss_citeseer}
\end{figure*}
\begin{figure*}
    \centering
    \includegraphics[width=\textwidth]{ 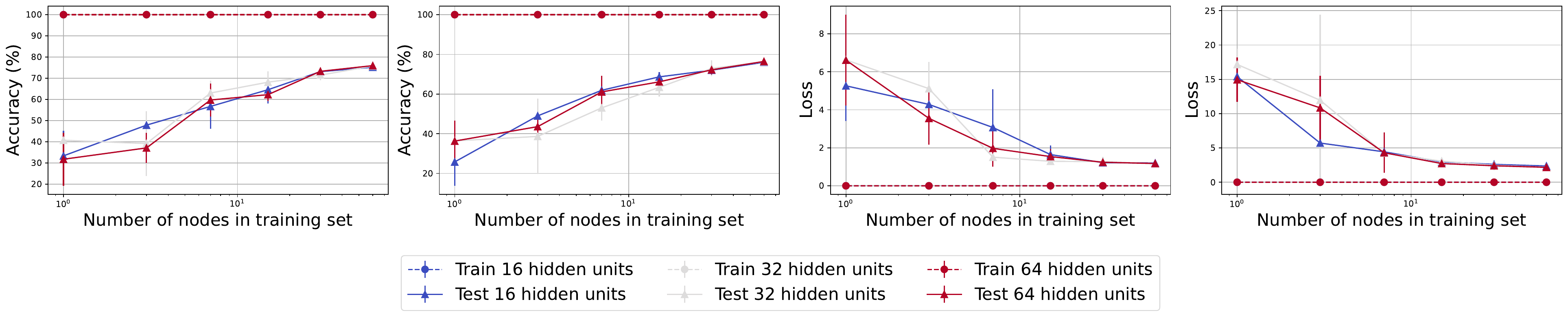}
    \includegraphics[width=\textwidth]{ 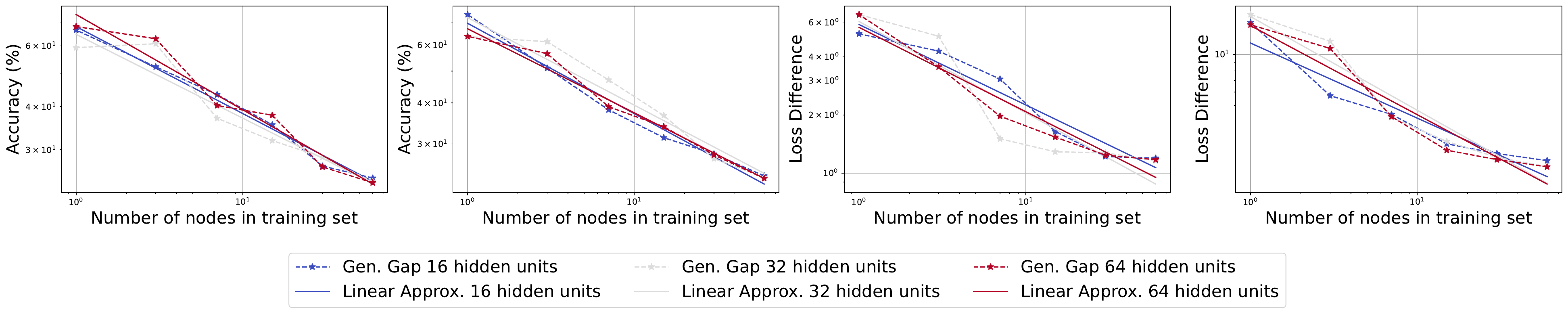}
    \begin{subfigure}{0.24\textwidth}
        \caption{Two Layers}
    \end{subfigure}
    \begin{subfigure}{0.24\textwidth}
        \caption{Three Layers}
    \end{subfigure}
    \begin{subfigure}{0.24\textwidth}
        \caption{Two Layers}
    \end{subfigure}
    \begin{subfigure}{0.24\textwidth}
        \caption{Three Layers}
    \end{subfigure}
    \caption{{PubMed dataset: GNN with $\{2,3\}$ layers and varying numbers of hidden units $\{16,32,64\}$ with a varying number of nodes in the training set trained in the
    PubMed dataset. In the top row, we plot the accuracy/loss in the training and testing sets as a function of the number of nodes in the training set for the Cora dataset. In the bottom row, we plot the accuracy difference and the best linear fit. }}
    \label{fig:acc_loss_pubmed}
\end{figure*}

In this section, we corroborate the proposed bound in real-world experiments. To this end, we learn a GNN in a dataset with a varying number of nodes. We denote the training set as the nodes seen in training and the testing set as the unseen nodes. Therefore, the training error is equivalent to the empirical risk \eqref{eqn:empirical_risk}, and the testing error is a way of estimating the statistical risk \eqref{eqn:statistical_risk}. We considered various layers and hidden units, and we repeated each experiment $10$ times. 

The task consists of predicting the class of a given node. We computed the training loss (measured in cross-entropy loss), and accuracy (measured in percentage) for each set. Additionally, we computed the difference between training and testing which corresponds to the generalization gap. 

All experiments used a \texttt{NVIDIA GeForce RTX 3090} GPU. We used \texttt{SGD} with a learning rate of $0.005$, no weight decay, and trained for $1000$ epochs. The reported value corresponds to the one obtained at the last iteration. In all cases we verified that the GNN converges to a stable value.

\subsection{Arxiv Dataset}
The Arxiv dataset represents the citation network between Computer Science Arxiv papers \cite{mikolov2013distributed,wang2020microsoft}. Every node in the dataset represents a paper, and each edge represents a citation between the two. The input signal for each node is a $128$ dimensional vector representing an embedding of the individual words in the title and abstract. The task is to classify the paper into $40$ subject areas of computer science. In all, this dataset is composed of $169,343$ nodes and $1,166,243$ edges. For this experiment, we trained GNNs with $\{1,2,3,4\}$ layers, and $\{64,128,256,512\}$ hidden units per layer.

\begin{table}[]
\centering
\begin{tabular}{|ccc|}
\hline
\multicolumn{1}{|c}{Layers} & \multicolumn{1}{c}{Hidden Units} & \begin{tabular}[c]{@{}c@{}}Pearson Correlation\\ Accuracy / Loss\end{tabular} \\ \hline\hline
\multicolumn{3}{|c|}{Arxiv Dataset}                                                                                                              \\ \hline\hline
$1$ & $64$  & $-0.991/-0.980$  \\ \hline
$1$ & $128$  & $-0.992/-0.980$  \\ \hline
$1$ & $256$  & $-0.994/-0.985$  \\ \hline
$1$ & $512$  & $-0.993/-0.981$  \\ \hline
$2$ & $64$  & $-0.997/-0.997$  \\ \hline
$2$ & $128$  & $-0.997/-0.998$  \\ \hline
$2$ & $256$  & $-0.991/-0.993$  \\ \hline
$2$ & $512$  & $-0.991/-0.994$  \\ \hline
$3$ & $64$  & $-0.997/-0.998$  \\ \hline
$3$ & $128$  & $-0.994/-0.996$  \\ \hline
$3$ & $256$  & $-0.982/-0.988$  \\ \hline
$3$ & $512$  & $-0.982/-0.987$  \\ \hline
$4$ & $64$  & $-0.997/-0.998$  \\ \hline
$4$ & $128$  & $-0.993/-0.994$  \\ \hline
$4$ & $256$  & $-0.983/-0.989$  \\ \hline
$4$ & $512$  & $-0.984/-0.988$  \\ \hline\hline
\multicolumn{3}{|c|}{Cora Dataset}  \\ \hline\hline
$2$ & $16$  & $-0.980/-0.925$   \\ \hline
$2$ & $32$  & $-0.969/-0.966$   \\ \hline
$2$ & $64$  & $-0.969/-0.978$   \\ \hline
$3$ & $16$  & $-0.989/-0.975$   \\ \hline
$3$ & $32$  & $-0.968/-0.973$   \\ \hline
$3$ & $64$  & $-0.987/-0.976$   \\ \hline \hline
\multicolumn{3}{|c|}{{  CiteSeer Dataset}}  \\ \hline\hline
$2$ & $16$  & $-0.952/-0.592$   \\ \hline
$2$ & $32$  & $-0.971/-0.798$   \\ \hline
$2$ & $64$  & $-0.953/-0.912$   \\ \hline
$3$ & $16$  & $-0.972/-0.835$   \\ \hline
$3$ & $32$  & $-0.972/-0.877$   \\ \hline
$3$ & $64$  & $-0.970/-0.256$   \\ \hline \hline
\multicolumn{3}{|c|}{{  PubMed Dataset}} \\ \hline\hline
$2$ & $16$  & $-0.994/-0.972$   \\ \hline
$2$ & $32$  & $-0.958/-0.914$   \\ \hline
$2$ & $64$  & $-0.976/-0.972$   \\ \hline
$3$ & $16$  & $-0.991/-0.955$   \\ \hline
$3$ & $32$  & $-0.971/-0.966$   \\ \hline
$3$ & $64$  & $-0.989/-0.963$   \\ \hline
\end{tabular}
\caption{{Pearson Correlation Coefficients for the  linear fit computed on Figures \ref{fig:acc_arxiv}, \ref{fig:loss_arxiv}, \ref{fig:acc_loss_cora}, \ref{fig:acc_loss_citeseer}, and \ref{fig:acc_loss_pubmed} for the Arxiv, Cora, CiteSeer and PubMed datasets respectively. For the Arxiv dataset the coefficient was computed using the points corresponding to training accuracy below $98\%$. For the Cora dataset, all points were used.  }}
\label{table:pearson}
\end{table}

Our main goal in this experiment is to validate the theoretical bound derived in Theorem \ref{thm:graph_geneneralization_gap}. We evaluated this bound both for the accuracy (Figure \ref{fig:acc_arxiv}) as well as the loss (Figure \ref{fig:loss_arxiv}). In each plot's upper row, we show the training and testing accuracy/loss achieved by the trained GNN. Intuitively, when the number of nodes is small, the GNN can \textit{overfit} the training set,i.e. perfectly classify all the nodes in the training set. As the number of nodes increases, the GNN fails to overfit, and the training accuracy falls below $100\%$. In the lower, row of Figure \ref{fig:acc_arxiv} and Figure \ref{fig:loss_arxiv}, we plot both the generalization gap (the difference between training and testing accuracy/loss)
and the best linear fit for the points whose training accuracy is below $98\%$. Visually, we can see that the generalization gap follows a linear pattern in \texttt{log} scale. This is consistent with the theoretical result of Theorem \ref{thm:graph_geneneralization_gap}. Formally, we computed the Pearson correlation coefficient which measures the linear relationships between the data. In Table \ref{table:pearson} we show that the coefficient is always above $0.98$ both for accuracy as well as loss. This translates into a strong correlation, empirically validating the claim that the generalization gap follows a linear relationship in a logarithmic scale with respect to the number of nodes present in the empirical risk. 

We also qualitatively evaluate the relationship between the slope $a$ and point $b$ of the linear fit $a*\log(N)+b$ as a function of the number of layers, and hidden units per layer. In Figure \ref{fig:a_b_arxiv} we plot the values of the slope $a$ and point $b$ for accuracy and loss as a function of the number of hidden units in the GNN. In order to compute the value in logarithmic scale, we took the absolute value of the generalization gap. In the same figure, we include all the layers considered in the experiment. Intuitively, all slope values are negative, as the generalization gap decreases with the number of nodes in the dataset, a larger value of the slope $a$ indicates that the rate is \textit{slower}, whereas a smaller value (more negative) indicates that the generalization gap decreases faster. 
As can be seen in Figure \ref{fig:a_b_arxiv}, a salient conclusion is that as the number of hidden units increases, the generalization gap decreases more slowly, as the value of slope $a$ increases. The same is true for the number of layers. As the number of layers increases, the value of slope $a$ increases, and thus, the generalization gap decreases more slowly. This result is intuitive, given that the training error decreases more slowly as the size of the GNN is larger -- either more layers or more hidden units. As can be seen in Figure \ref{fig:acc_arxiv}, having more layers or more hidden units does not necessarily improve the testing loss. However, having only one layer in the GNN does not allow for a good generalization, as the accuracy in the testing set settles around $60\%$. In all, we show that the generalization gap indeed increases with the number of layers and hidden units in the GNN. 

\subsection{Cora Dataset}
The Cora dataset is a smaller graph than the Arxiv dataset. The task is also node label prediction, and in this case, there are $7$ classes. In this dataset, the input feature dimension is $1,433$, with $2,708$ nodes and $5,429$ edges \cite{yang2016revisiting}. In this case the experiments were done for $\{16,32,64\}$ hidden units and $\{2,3\}$ layers.

Analogous to the Arxiv dataset, our main goal in this experiment is to validate the theoretical bound derived in Theorem \ref{thm:graph_geneneralization_gap}. This experiment is important because we are in an overfitting regime for all numbers of nodes in the training set. That is to say, the GNNs always obtain $100\%$ accuracy in the training set as can be seen in the first two columns of Figure \ref{fig:acc_loss_cora}. This regime differs from the previous Arxiv dataset.

Visually, we verify that the generalization gap follows a linear decrease in log scale with respect to the number of nodes in the training set. This gap is driven by the increase in the testing accuracy/loss given that the training value remains constant. Formally, we can see in Table \ref{table:pearson} that the Pearson correlation coefficient is always above $0.96$ (except $2$ layers $64$ hidden units measured in the loss) which again shows a strong linear correlation between the number of nodes in the training set (in log scale) and the generalization gap. 

The conclusion of this experiment is positive, we empirically validated the claims that we put forward -- the generalization gap decreases linearly with the logarithm of the number of nodes. 
{  \subsection{CiteSeer Dataset}
The CiteSeer dataset has $3327$ nodes and $9228$ edges. There are $6$ classes, the feature dimension is $3703$, and we trained GNN with  $\{16,32,64\}$ hidden units and $\{2,3\}$ layers. 

In Figure \ref{fig:acc_loss_citeseer} we can see the generalization gap results which show a linear correlation in log-scale. In Table \ref{table:pearson} we show the Pearson correlation coefficient for these experiments. For Accuracy, the results show a strong linear correlation as our theory predicts. 

}
{ \subsection{PubMed Dataset}

The CiteSeer dataset has $19717$ nodes and $88651$ edges. There are $3$ classes and the feature dimension is $500$ dimensional, and we trained GNN with  $\{16,32,64\}$ hidden units and $\{2,3\}$ layers. 

In Figure \ref{fig:acc_loss_pubmed} we can see the generalization gap results which show a linear correlation in log-scale. In Table \ref{table:pearson} we show the Pearson correlation coefficient for these experiments. Both for Accuracy and Loss, the results show a strong linear correlation as our theory predicts. 
}

\section{Conclusion}
\label{sec:conclusion}

In this paper, we implemented a manifold convolutional neural network as a limit model for graph neural networks. We considered the case when graphs are constructed from samples from the manifold. We showed that when a graph is constructed with the i.i.d. randomly sampled points over the manifold, the GNN converges to the MNN with a non-asymptotic rate both in probability and in expectation. We utilized these convergence results as a stepping stone towards proving an upper bound on the generalization gap of GNNs.  Formally, we showed that the difference between the statistical and empirical risks decreases with the number of sampled points from the manifold while it increases with the dimension of the manifold.
Furthermore, our bound admits a linear interpretation in a logarithmic scale which can be validated empirically.
Through experiments, we verified our convergence and generalization results numerically with two real-world datasets. We trained GNNs on a varying number of nodes and compared their generalization gaps. We empirically verified that the theoretically proposed rate matches the one seen in practice for a variety of layers and hidden dimensions. We also computed the Pearson Correlation coefficient to validate the linear relationships and for the most part, found values always above $0.9$, which indicates a strong linear relationship.

\urlstyle{same}
\bibliographystyle{IEEEtran}
\bibliography{bib}
\newpage
\appendix
 {\section{Appendix}
 \subsection{Proof of Proposition \ref{prop:prob-diff}}
\label{app:prop1}
To prove the convergence result of the output of GNN $\bm\Phi(\bbH,\bbL_N,\bbx_N)$ to MNN $\bm\Phi(\bbH,\ccalL_\rho, f)$, we first need to import a proposition that states the spectrum convergence of graph Laplacian $\bbL_N$ to the weighted Laplace operator $\ccalL_\rho$. 

 \begin{proposition}{\cite[Theorem~2.4, Theorem~2.6]{calder2022improved}}\label{thm:converge-spectrum-sparse}
Let $\ccalM\subset \reals^\mathsf{N}$ of measure $\mu$ with probability density function $\rho$ be equipped with a weighted Laplace operator $\ccalL_\rho$ as defined in \eqref{eqn:weight-Laplace}, whose eigendecomposition is given by $\{\lambda_i,\bm\phi_i\}_{i=1}^\infty$. Let $\bbL_N$ be the discrete graph Laplacian of graph weights defined as \eqref{eqn:edge-weight}, with spectrum $\{\lambda_{i,N},\bm\phi_{i,N}\}_{i=1}^N$.
Fix $K\in \mathbb{N}^+$ and assume that  $\epsilon=\epsilon(N)\geq \left({\log(CN/\delta)}/{N}\right)^{1/(d+4)}$  
Then, with probability at least $1-\delta$, we have 
\begin{equation}
    |\lambda_i-\lambda_{i,N} |\leq C_{\ccalM,1} \lambda_i{\epsilon},  \quad 
    \|a_i\bm\phi_{i,N} -\bm\phi_i\|\leq C_{\ccalM,2} \frac{\lambda_i}{\theta_i} {\epsilon} ,
\end{equation}
with $a_i\in\{-1,1\}$ for all $i<K$ and $\theta_i$ the eigengap of $\ccalL$, i.e., $\theta_i=\min\{\lambda_i-\lambda_{i-1},\lambda_{i+1}-\lambda_{i}\}$. The constants $ C_{\ccalM,1}$, $ C_{\ccalM,2}$ depend on $d$ and the volume of $\ccalM$.
\end{proposition}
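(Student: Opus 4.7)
My plan is to reduce the discrete-to-continuum spectral comparison to three controlled stages that separate deterministic bias, stochastic fluctuations, and spectral perturbation, following the strategy of Calder--Garcia Trillos. The intermediate object is a non-local integral operator on the manifold,
\begin{equation*}
\ccalL_\epsilon f(x) = \frac{\alpha_d}{(d+2)\epsilon^{d+2}} \int_\ccalM \mathbbm{1}\!\left(\frac{\|x-y\|}{\epsilon}\right)\bigl(f(x)-f(y)\bigr)\,d\mu(y),
\end{equation*}
which mediates between the discrete graph Laplacian $\bbL_N$ (a Monte-Carlo average against the sampled measure $\frac{1}{N}\sum_j \delta_{x_N^j}$) and the continuum operator $\ccalL_\rho$.

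The first stage (bias) is a pointwise consistency estimate $\|\ccalL_\epsilon f - \ccalL_\rho f\|_{L^2(\ccalM)} \lesssim \epsilon^2 \|f\|_{H^4(\ccalM)}$, obtained by Taylor-expanding $f(y)$ in normal coordinates centered at $x$, matching the second-order terms to the weighted Laplacian (the density $\rho$ appearing through $d\mu = \rho\,dV$), and bounding the remainder. Applied to the eigenfunction $\bm\phi_i$, elliptic regularity of $\ccalL_\rho$ controls $\|\bm\phi_i\|_{H^4}$ by a power of $\lambda_i$, giving $\|(\ccalL_\epsilon - \ccalL_\rho)\bm\phi_i\| \lesssim \lambda_i \epsilon$. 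The second stage (variance) compares $\bbL_N$ to $\ccalL_\epsilon$. For a fixed smooth test function, a Bernstein inequality applied to the sum $\frac{1}{N}\sum_j K_\epsilon(x,x_N^j)(f(x)-f(x_N^j))$ yields a stochastic fluctuation of order $\sqrt{\log(1/\delta)/(N\epsilon^{d+2})}$; a union bound over a sufficiently fine net of the first $K$ eigenfunctions introduces the extra $\log N$ factor. Balancing $\epsilon^2$ against $\sqrt{\log(CN/\delta)/(N\epsilon^{d+2})}$ is precisely what forces the admissibility condition $\epsilon \geq (\log(CN/\delta)/N)^{1/(d+4)}$.

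The third stage transfers the operator-level bound $\|(\bbL_N-\ccalL_\rho)\bm\phi_i\| \lesssim \lambda_i \epsilon$ into spectral estimates. For the eigenvalues, the Courant--Fischer min-max characterization, applied with test subspaces built from discrete restrictions of the $\bm\phi_j$ in one direction and from smooth interpolants of the $\bm\phi_{j,N}$ in the other, yields $|\lambda_i - \lambda_{i,N}| \lesssim \lambda_i \epsilon$. For the eigenvectors, a Davis--Kahan type argument expands $\bm\phi_{i,N}$ (lifted to $L^2(\ccalM)$) in the basis $\{\bm\phi_j\}$; isolating the $\bm\phi_i$ component via the resolvent identity controls the orthogonal part by $\|(\bbL_N-\ccalL_\rho)\bm\phi_i\|/\theta_i$, producing the advertised $\lambda_i/\theta_i$ factor. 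The sign ambiguity $a_i \in \{-1,1\}$ is intrinsic to this framework.

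The main obstacle will be the second stage: pushing the Bernstein estimate from a pointwise-in-$x$ bound to one uniform over the data-dependent discrete eigenvectors, because the latter are themselves random functions on $X_N$ with no a priori smoothness. The Calder--Garcia Trillos remedy is to construct a transportation map between the empirical measure on $X_N$ and $\mu$ that is nearly an isometry in Wasserstein-type metrics, then lift discrete functions to Lipschitz continuum functions whose Lipschitz constants are coordinated with the kernel scale $\epsilon$; carrying out this interpolation and matching it to the concentration analysis is what produces the sharp exponent $1/(d+4)$. A secondary subtlety is that the eigenvector conclusion only closes when $\theta_i$ dominates the perturbation, so avoiding near-degeneracies among the first $K$ eigenvalues is essential for the bound to be non-vacuous.
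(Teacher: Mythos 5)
This statement is not proved in the paper at all: it is imported verbatim as Proposition~2 with the citation \cite[Theorems~2.4 and~2.6]{calder2022improved}, and the appendix uses it as a black box. So there is no internal proof to compare against; the relevant comparison is with the argument in the cited reference. At the level of architecture your plan is faithful to that argument: the intermediate non-local operator $\ccalL_\epsilon$, the bias/variance split, the transportation-map (rather than pointwise-Bernstein) resolution of the uniformity-over-random-eigenvectors problem, the min-max argument for eigenvalues, and the eigengap-weighted Davis--Kahan step for eigenfunctions are all genuinely the ingredients of the Calder--Garc\'ia Trillos proof, and you correctly identify the interpolation/discretization maps coordinated with the kernel scale as the technical crux.

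Two points where your sketch, as written, would not close. First, the exponent bookkeeping is inconsistent: balancing a second-order bias $\epsilon^2$ against the fluctuation $\sqrt{\log(CN/\delta)/(N\epsilon^{d+2})}$ gives $\epsilon \sim (\log(CN/\delta)/N)^{1/(d+6)}$, not $1/(d+4)$. The stated admissibility threshold $1/(d+4)$ arises because the final error in the proposition is \emph{first} order, $C_{\ccalM,1}\lambda_i\epsilon$, and one only needs the stochastic term $\sqrt{\log(CN/\delta)/(N\epsilon^{d+2})}$ to be dominated by $\epsilon$, which is exactly the condition $\epsilon^{d+4}\gtrsim \log(CN/\delta)/N$. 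Relatedly, your bias estimate $\epsilon^2\|f\|_{H^4}$ applied to $\bm\phi_i$ with $\|\bm\phi_i\|_{H^4}\sim\lambda_i^2$ yields $\lambda_i^2\epsilon^2$, which is only $\lesssim\lambda_i\epsilon$ when $\lambda_i\epsilon\lesssim 1$; this is where the restriction to $i<K$ with $K$ fixed (constants depending on $K$) must enter, and your sketch should say so explicitly. Second, the proposal is a plan rather than a proof: none of the three stages is actually executed, and the hardest one --- lifting the random discrete eigenvectors to continuum functions with controlled Dirichlet energy via the $\infty$-transportation estimate --- is named but not carried out. Since the paper itself delegates all of this to \cite{calder2022improved}, the honest conclusion is that your outline is a reasonable reconstruction of the cited proof's strategy, but it is not a substitute for it, and the quantitative slip on the exponent would need to be repaired before any of the downstream rates in Proposition~1 and Theorem~1 could be recovered.
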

The inner product of signals $f, g\in L^2(\ccalM)$ is defined as 
\begin{equation}\label{eqn:innerproduct}
    \langle f,g \rangle_{\ccalM}=\int_\ccalM f(x)g(x) \text{d}\mu(x), 
\end{equation}
where $\text{d}\mu(x)$ is the volume element with respect to the measure $\mu$ over $\ccalM$. The norm of the manifold signal $f$ is subsequently defined as 
\begin{equation}\label{eqn:manifold_norm}
    \|f\|^2_{\ccalM}={\langle f,f \rangle_{\ccalM}}.
\end{equation}
Because $\{x_1, x_2,\cdots,x_N\}$ is a set of randomly sampled points from $\ccalM$, based on Theorem 19 in \cite{von2008consistency} we can claim that
\begin{equation}
   \left|\langle \bbP_N f,\bm\phi_{i } \rangle   -\langle f,\bm\phi_i\rangle_\ccalM\right| = O\left(\sqrt{\frac{\log (1/\delta)}{N}}\right).
\end{equation}
This also indicates that 
\begin{equation}
\label{eqn:diff-sample}
   \left|\|\bbP_N f \|^2-\|f\|^2_\ccalM\right| = O\left(\sqrt{\frac{\log (1/\delta)}{N}}\right),
\end{equation}
which indicates $\|\bbP_N f\|=\|f\|_\ccalM + O((\log (1/\delta)/N)^{1/4})$.
As GNN and MNN are composed of graph filters and manifold filters, respectively, we start with deriving the output difference of a graph filter $\bbh(\bbL_N)$ and a manifold filter $\bbh(\ccalL_\rho)$. The input of GNN is a sampled version of bandlimited manifold signal $f$. We first write out the difference of filter outputs as
{\allowdisplaybreaks
 \begin{align}
    &\nonumber \|\bbh(\bbL_N)\bbP_N f - \bbP_N \bbh(\ccalL_\rho) f\|\leq \\
    & \label{eqn:1-1} \Bigg\| \sum_{i=1}^N \hat{h}(\lambda_{i,N}) \langle \bbP_N f,\bm\phi_{i,N} \rangle \bm\phi_{i,N}  - \sum_{i=1}^M \hat{h}(\lambda_i)\langle f,\bm\phi_i\rangle_{\ccalM} \bbP_N \bm\phi_i  \Bigg\|
     \\ 
     &\nonumber \leq  \Bigg\| \sum_{i=1}^M \hat{h}(\lambda_{i,N}) \langle \bbP_N f,\bm\phi_{i,N} \rangle \bm\phi_{i,N} - \sum_{i=1}^M \hat{h}(\lambda_i) \langle \bbP_Nf,\bm\phi_{i}  \rangle_{\ccalM}\bm\phi_{i}\\
     &\qquad \qquad \qquad \label{eqn:1-2} + \sum_{i=M+1}^N \hat{h}(\lambda_{i,N}) \langle \bbP_N f,\bm\phi_{i,N} \rangle \bm\phi_{i,N} \Bigg\| \\
     &\nonumber \leq \Bigg\| \sum_{i=1}^M \hat{h}(\lambda_{i,N}) \langle \bbP_N f,\bm\phi_{i,N} \rangle \bm\phi_{i,N} - \sum_{i=1}^M \hat{h}(\lambda_i) \langle \bbP_Nf,\bm\phi_{i}  \rangle_{\ccalM}\bm\phi_{i}\Bigg\| \\
     & \qquad \qquad \qquad \label{eqn:1-3}+ \left\|\sum_{i=M+1}^N \hat{h}(\lambda_{i,N}) \langle \bbP_N f,\bm\phi_{i,N} \rangle \bm\phi_{i,N}\right\|.
 \end{align}}
 From \eqref{eqn:1-1} to \eqref{eqn:1-2}, we decompose the spectral components of the output of graph filter as the summation of the first $M$ items and the rest $N-M$ items. From \eqref{eqn:1-2} to \eqref{eqn:1-3}, we use the triangle inequality to decompose the filter output difference as two terms.
 
 The first part of \eqref{eqn:1-3} can be decomposed by subtracting and adding an intermediate term $\sum_{i=1}^M\hat{h}(\lambda_i) \langle \bbP_Nf,\bm\phi_{i,N} \rangle \bm\phi_{i,N}$. From \eqref{eqn:2-1} to \eqref{eqn:conv-1}, we use the triangle inequality.
 \begin{align}
     & \nonumber \left\| \sum_{i=1}^M \hat{h}(\lambda_{i,N}) \langle \bbP_N f,\bm\phi_{i,N}  \rangle \bm\phi_{i,N}  - \sum_{i=1}^M \hat{h}(\lambda_i)\langle f,\bm\phi_i\rangle_{\ccalM} \bbP_N \bm\phi_i  \right\|
     \\ &\nonumber \leq  \Bigg\| \sum_{i=1}^M  \hat{h}(\lambda_{i,N}) \langle \bbP_Nf,\bm\phi_{i,N} \rangle \bm\phi_{i,N}-  \sum_{i=1}^M\hat{h}(\lambda_i) \langle \bbP_Nf,\bm\phi_{i,N} \rangle \bm\phi_{i,N}  \\
     & \label{eqn:2-1} +\sum_{i=1}^M \hat{h}(\lambda_i)  \langle \bbP_N f,\bm\phi_{i,N} \rangle  \bm\phi_{i,N} - \sum_{i=1}^M \hat{h}(\lambda_i)\langle f,\bm\phi_i \rangle_{\ccalM} \bbP_N \bm\phi_i   \Bigg\|.
     \\
     &\nonumber \leq  \left\| \sum_{i=1}^M \left(\hat{h}(\lambda_{i,N})- \hat{h}(\lambda_i) \right) \langle \bbP_Nf,\bm\phi_{i,N} \rangle \bm\phi_{i,N} \right\| \\
     &  +\left\| \sum_{i=1}^M \hat{h}(\lambda_i)\left( \langle \bbP_N f,\bm\phi_{i,N} \rangle  \bm\phi_{i,N} - \langle f,\bm\phi_i \rangle_{\ccalM} \bbP_N \bm\phi_i \right)  \right\|.\label{eqn:conv-1}
 \end{align}
In equation \eqref{eqn:conv-1}, we can observe that the first part relies on the difference of eigenvalues, while the second part depends on the eigenvector difference. In the following, we bound these two parts based on Proposition \ref{thm:converge-spectrum-sparse} as follows. 

The first term in \eqref{eqn:conv-1} can be decomposed based on the triangle inequality and Cauchy-Schwartz inequality. We note that the eigenvectors are orthonormal which indicates that $\|\bm\phi_{i,N}\| = 1$.
\begin{align}
   &\nonumber \left\| \sum_{i=1}^M (\hat{h}(\lambda_{i,n} ) - \hat{h}(\lambda_i)) \langle \bbP_N f,\bm\phi_{i,N} \rangle  \bm\phi_{i,N}  \right\|  \\
   & \leq \sum_{i=1}^M \left|\hat{h}(\lambda_{i,N} )-\hat{h}(\lambda_i)\right| |\langle \bbP_N f,\bm\phi_{i,N}  \rangle |\label{eqn:p1} \\
   &\leq \|\bbP_N f\|  \sum_{i=1}^M C_{\ccalM,1}\epsilon \lambda_i^{-d}  \leq  \|\bbP_N f\|C_{\ccalM,1}\epsilon  \sum_{i=1}^M  i^{-2} \label{eqn:p2}\\
   &\leq \left( \|f\|_\ccalM+ \left(\frac{\log (1/\delta)}{N}\right)^{\frac{1}{4}}\right)C_{\ccalM,1}\epsilon \frac{\pi^2}{6} := A_1(N).
\end{align} 
From \eqref{eqn:p1} to \eqref{eqn:p2}, we use the fact that the magnitude of the filters scale with $\lambda^{-d}$. 
In \eqref{eqn:p2}, we implement Weyl's law \cite{arendt2009weyl} which indicates that eigenvalues of Laplace operator scales with the order of $i^{2/d}$. The last inequality comes from the fact that $\sum_{i=1}^\infty i^{-2}=\frac{\pi^2}{6}$.

The second term in \eqref{eqn:conv-1} can be written as subtracting and adding an intermediate-term $\sum_{i=1}^M \hat{h}(\lambda_i)\langle \bbP_Nf,\bm\phi_{i,N}  \rangle  \bbP_N\bm\phi_i$ and decomposed with the triangle inequality as follows.
{\allowdisplaybreaks
\begin{align}
  & \nonumber \Bigg\| \sum_{i=1}^M \hat{h}(\lambda_i)\left( \langle \bbP_Nf,\bm\phi_{i,N}  \rangle \bm\phi_{i,N}- \langle f,\bm\phi_i \rangle_{\ccalM} \bbP_N \bm\phi_i\right)  \Bigg\|\\
   & \leq \nonumber \Bigg\|  \sum_{i=1}^M \hat{h}(\lambda_i)  \left(\langle \bbP_N f,\bm\phi_{i,N} \rangle \bm\phi_{i,N}   - \langle \bbP_Nf,\bm\phi_{i,N}  \rangle  \bbP_N\bm\phi_i\right)\Bigg\|\\
   &\label{eqn:term1}+ \left\| \sum_{i=1}^M  \hat{h}(\lambda_i) \left(\langle \bbP_N f,\bm\phi_{i,N} \rangle  \bbP_N\bm\phi_i -\langle f,\bm\phi_i\rangle_\ccalM \bbP_N\bm\phi_i \right) \right\|
\end{align}}
The first term in \eqref{eqn:term1} can be bounded with the triangle inequality and Cauchy-Schwartz inequality as
\begin{align}
& \nonumber \left\|  \sum_{i=1}^M \hat{h}(\lambda_i) \left(\langle \bbP_N f,\bm\phi_{i,N} \rangle \bm\phi_{i,N}  - \langle \bbP_Nf,\bm\phi_{i,N} \rangle_{\ccalM} \bbP_N\bm\phi_i\right)\right\|\\
& \label{eqn:3-1} \leq \sum_{i=1}^{M} \left|\hat{h}(\lambda_i)\right|\|\bbP_N f\|\|\bm\phi_{i,N} - \bbP_N\bm\phi_i\|\\
&\label{eqn:3-2} \leq \sum_{i=1}^M (\lambda_i^{-d}) \frac{C_{\ccalM,2}\epsilon}{\theta_i}\left(\|f\|_\ccalM + \left(\frac{\log (1/\delta)}{N}\right)^{\frac{1}{4}}\right)\\
&\label{eqn:3-3} \leq C_{\ccalM,2}\epsilon \frac{\pi^2}{6} \max_{i=1,\cdots,M}{\theta_i^{-1}}\left(\|f\|_\ccalM + \left(\frac{\log (1/\delta)}{N}\right)^{\frac{1}{4}}\right)\\
&:= A_2(M,N).
\end{align}
From \eqref{eqn:3-1} to \eqref{eqn:3-2}, we relies on the magnitude assumption of the filter response function in Definition \ref{def:filter} and inequality \eqref{eqn:diff-sample} as well as the eigenvector gap in Proposition \ref{thm:converge-spectrum-sparse}.  From \eqref{eqn:3-2} to \eqref{eqn:3-3}, we use the Weyl's law which states the order $\lambda_i\sim i^{2/d}$ and the summation $\sum_{i=1}^\infty i^{-2}=\pi^2/6$. 

The second term in \eqref{eqn:term1} with triangle inequality can be written as
\begin{align}
     & \nonumber \Bigg\| \sum_{i=1}^M \hat{h}(\lambda_{i,N} ) (\langle \bbP_N f,\bm\phi_{i,N}\rangle  \bbP_N\bm\phi_i -\langle f,\bm\phi_i\rangle_\ccalM \bbP_N\bm\phi_i ) \Bigg\| \\
   &\label{eqn:4-1}\leq \sum_{i=1}^M \left|\hat{h}(\lambda_{i,N}) \right| \left|\langle \bbP_N f,\bm\phi_{i,N} \rangle   -\langle f,\bm\phi_i\rangle_\ccalM\right|\|\bbP_N\bm\phi_i\|\\
   &\label{eqn:4-2} \leq \sum_{i=1}^M (\lambda_i^{-d})\sqrt{\frac{\log (1/\delta)}{N}} \left(1+\left(\frac{\log (1/\delta)}{N}\right)^{\frac{1}{4}}\right)\\
   &\label{eqn:4-3} \leq \frac{\pi^2}{6}\sqrt{\frac{\log (1/\delta)}{N}} \left(1+\left(\frac{\log (1/\delta)}{N}\right)^{\frac{1}{4}}\right):=A_3(N)
\end{align}
From \eqref{eqn:4-1} to \eqref{eqn:4-2}, it relies on the magnitude of filter in Definition \ref{def:filter} and inequality \eqref{eqn:diff-sample} of the norm difference caused by sampling. From \eqref{eqn:4-2} to \eqref{eqn:4-3}, we use the order of eigenvalues based on Weyl's law and the summation inequality as previous.

The second term in \eqref{eqn:1-3} can be bounded with triangle inequality and Cauchy-Schwartz inequality as
\begin{align}
   &\nonumber  \left\|\sum_{i=M+1}^N \hat{h}(\lambda_{i,N}) \langle \bbP_N f,\bm\phi_{i,N} \rangle \bm\phi_{i,N}\right\|\\
   &\label{eqn:5-1}\leq \sum_{i=M+1}^N \left|\hat{h}(\lambda_{i,N})\right|\|\bbP_N f\| 
   \|\bm\phi_{i,N}\|\\
    &\label{eqn:5-2}\leq \sum_{i=M+1}^N (\lambda_{i,N}^{-d})\left(\|f\|_\ccalM+\left(\frac{\log (1/\delta)}{N}\right)^{\frac{1}{4}}\right)\\
    &\label{eqn:5-3}\leq \sum_{i=M+1}^N (\lambda_{i,N}^{-d+1}) 
    \|f\|_\ccalM
    \\&\label{eqn:5-4}\leq  (1 +C_{\ccalM,1}\epsilon)^{-d} \sum_{i=M+1}^\infty (\lambda_i^{-d})  \|f\|_\ccalM\\
    &\label{eqn:5-5}\leq  M^{-1}\|f\|_\ccalM:= A_4(M).
\end{align}
From \eqref{eqn:5-1} to \eqref{eqn:5-2}, it follows a similar way as we have derived from \eqref{eqn:4-1} to \eqref{eqn:4-2}. From \eqref{eqn:5-2} to \eqref{eqn:5-3}, we scale by taking the summation to infinity. From \eqref{eqn:5-3} to \eqref{eqn:5-4}, it depends on the eigenvalue gap in Proposition \ref{thm:converge-spectrum-sparse}. From \eqref{eqn:5-4} to \eqref{eqn:5-5}, we use the order of the Weyl's law and the inequality $\sum_{i=M}^\infty i^{-2}\leq 1/M$.

We note that the bound is made up by adding all the decomposed terms $A_1( N)+A_2(M,N)+A_3( N)+A_4(M)$, related to the bandwidth of manifold signal $M$ and the number of sampled points $N$. As $\epsilon$ scales with the order $\left(\frac{\log(CN/\delta)}{N} \right)^{\frac{1}{d+4}}$. This makes the bound scale with the order
{\allowdisplaybreaks
\begin{align}
    &\nonumber \|\bbh(\bbL_N)\bbP_N f - \bbP_N \bbh(\ccalL_\rho) f\| \\
    &\nonumber \leq C_1 \left(\frac{\log\frac{C_1N}{\delta}}{N} \right)^{\frac{1}{d+4}} +  C_2\left(\frac{\log\frac{C_1N}{\delta}}{N} \right)^{\frac{1}{d+4}} \theta_M^{-1} \\
    &\qquad \qquad \qquad \qquad  + C_3 \sqrt{\frac{\log(1/\delta)}{N}} + C_4 M^{-1},
\end{align}}
with $C_1 = C_{\ccalM,1}\frac{\pi^2}{6}\|f\|_\ccalM$, $C_2 = C_{\ccalM,2}\frac{\pi^2}{6}$, $C_3 =\frac{\pi^2}{6}$ and $C_4 = \|f\|_\ccalM$.

With the normalized Lipschitz nonlinearities in Assumption \ref{ass:activation}, we have the single layer output difference bounded. From \eqref{eqn:6-1} to \eqref{eqn:6-2}, it relies on the point-wise operation of nonlinearity functions. From \eqref{eqn:6-2} to \eqref{eqn:6-3}, we use the Assumption \ref{ass:activation} that nonlinearities are normalized Lipschitz.
 \begin{align}
    &\nonumber\|\bm\Phi(\bbH,\bbL_N, \bbP_N f) - \bbP_N\bm\Phi(\bbH, \ccalL_\rho ,f ) \| \\
    & \label{eqn:6-1}= \| \sigma(\bbh(\bbL_N)\bbP_N f) - \bbP_N\sigma(\bbh(\ccalL_\rho) f) \| \\
    &\label{eqn:6-2} = \| \sigma(\bbh(\bbL_N)\bbP_N f) - \sigma(\bbP_N \bbh(\ccalL_\rho) f) \|
 \\&\label{eqn:6-3}  \leq \|\bbh(\bbL_N)\bbP_N f - \bbP_N \bbh(\ccalL_\rho) f\|\\
\nonumber
&  \label{eqn:6-4}\leq   C_1 \left(\frac{\log\frac{C_1N}{\delta}}{N} \right)^{\frac{1}{d+4}} +  C_2\left(\frac{\log\frac{C_1N}{\delta}}{N} \right)^{\frac{1}{d+4}} \theta_M^{-1} \\
    &\qquad \qquad \qquad \qquad\qquad   + C_3 \sqrt{\frac{\log(1/\delta)}{N}} + C_4 M^{-1}.
    \end{align}

As $N$ goes to infinity, for every $\delta >0$, there exists some $M_0$, such that for all $M>M_0$ it holds that $A_4(M)\leq \delta/2$. There also exists $n_0$, such that for all $N>n_0$, it holds that $A_1(N)+A_2(M_0, N)+A_3(N)\leq \delta/2$. We can conclude that the summations converge as $N$ goes to infinity.

\subsection{Proof of Corollary \ref{cor:expect-diff}}
\label{app:cor2}
We focus on deriving this expectation depending on the difference bound in probability in \eqref{eqn:prob-diff}. We denote $\hat \bby_N = \bm\Phi(\bbH,\bbL_N, \bbP_N f)$ and $\hat g = \bm\Phi(\bbH,\ccalL_\rho, f)$. We have 
\begin{align}
    &\nonumber \mathbb{P}\Bigg( \|\hat\bby_N - \bbP_N\hat g\|\leq (C_1+C_2\theta_M^{-1}) N^{-\frac{1}{d+4}}\left(\log\frac{N}{\delta} \right)^{\frac{1}{d+4}} \\
    &  \qquad \qquad +C_3 N^{-\frac{1}{2}} \sqrt{
    \log \frac{1}{\delta}} + C_4 M^{-1} 
     \Bigg) \geq 1-2\delta,
\end{align}
with $N \geq \log\frac{C}{\delta}$. We denote $k^2 = \log 1/\delta$, i.e. $\delta=e^{-k^2}$. We calculate the expectation value of the difference between $\hat{\bby}_N$ and $\bbP_N \hat{g}$ by decomposing the event space as follows.
{\allowdisplaybreaks
\begin{align}
   \nonumber  &\mathbb{E} [\|\hat\bby_N - \bbP_N\hat g\| ] \leq \sum_{k=0}^{\sqrt{N/C }} \mathbb{P}\Bigg(C_3 N^{-\frac{1}{2}} k +   C_4 M^{-1} \\ \nonumber  &(C_1+C_2\theta_M^{-1})) N^{-\frac{1}{d+4}} (\log N +k^2)^{\frac{1}{d+4}}
       \leq \|\hat\bby_N - \bbP_N\hat g\|
   \\
   &  \nonumber\leq (C_1+C_2\theta_M^{-1}) N^{-\frac{1}{d+4}} (\log N + (k+1)^2)^{\frac{1}{d+4}} \\& \nonumber +C_3 N^{-\frac{1}{2}} (k+1) + C_4 M^{-1}  \Bigg) \Bigg(C_3 N^{-\frac{1}{2}} (k+1) + C_4 M^{-1}
   \\&\nonumber +(C_1+C_2\theta_M^{-1}) N^{-\frac{1}{d+4}} (\log N + (k+1)^2)^{\frac{1}{d+4}}  \Bigg) + \nonumber \\ \nonumber &\sum_{k=\sqrt{N/C }}^\infty \bar M \mathbb{P}\Bigg( (C_1+C_2\theta_M^{-1})) N^{-\frac{1}{d+4}} (\log N +k^2)^{\frac{1}{d+4}}   \\
   & \nonumber +C_3 N^{-\frac{1}{2}} k +C_4 M^{-1}   \leq \|\hat\bby_N - \bbP_N\hat g\|_2\leq C_4 M^{-1}+
   \\
   &  \nonumber (C_1+C_2\theta_M^{-1}) N^{-\frac{1}{d+4}} (\log N + (k+1)^2)^{\frac{1}{d+4}}+C_3 N^{-\frac{1}{2}} (k+1) \Bigg).
\end{align}}
The upper bound of $\|\hat\bby_N - \bbP_N\hat g\|$ can be derived with the norm of the output function of GNN when the GNN contains a single layer 
\begin{align}
    \|\hat\bby_N - \bbP_N\hat g\| &\leq \| \bm\Phi(\bbH, \bbL,\bbP_N f)\|  +\|\bbP_N \bm\Phi(\bbH,\ccalL, f)\|  \\
    &\leq 2 \|\bbP_N f\| =\bar M.
\end{align}
Therefore the probability is zero when $ (C_1+C_2\theta_M) N^{-\frac{1}{d+4}} (k+1)^{\frac{2}{d+4}} + C_3 N^{-\frac{1}{2}} (k+1)  > \bar M = 2 \|\bbP_N f\|$, i.e. $k>\sqrt{N}\bar M$. Then we have 
{\allowdisplaybreaks
\begin{align}
    &\nonumber \mathbb{E}_\mu^N [\|\hat\bby_N - \bbP_N\hat g\|]\\
    &\nonumber \leq \sum_{k=0}^{\sqrt{N/C}} 2 e^{-k^2} \Bigg(C_3 N^{-\frac{1}{2}} (k+1) + C_4 M^{-1}  + \\
    &\qquad \qquad \qquad \nonumber(C_1+C_2\theta_M^{-1}) N^{-\frac{1}{d+4}} (\log N + (k+1)^2)^{\frac{1}{d+4}} \Bigg)  \\
&\qquad \qquad \qquad \quad \qquad \qquad \qquad  + \sum_{k=\sqrt{N/C}}^{\sqrt{N}\bar M} \bar M 2e^{-N/C} \\
    & \label{eqn:8-3}\leq C' N^{-\frac{1}{d+4}} + C''N^{-\frac{1}{2}} +C'''\left(\frac{\log N}{N}\right)^{\frac{1}{d+4}} + \bar M^2 e^{-N/C}\sqrt{N}.
\end{align}}
The last inequality is bounded by 
\begin{align}
   &\nonumber  \int_0^\infty 2e^{-k^2}(k+1)\text{d}k, \int_0^\infty 2e^{-k^2}(k+1)^{\frac{2}{d+4}}\text{d}k,\\
   &\qquad \qquad \qquad \qquad \int_0^\infty 2e^{-k^2}\text{d}k \leq 3\sqrt{\pi},
\end{align}
which is some constant.

\subsection{Proof of Theorem \ref{thm:graph_geneneralization_gap}}
\label{app:thm1}
Suppose $\bbH_E \in \arg\min_{\bbH\in\ccalH} R_E(\bbH)$, we have 

\begin{align}
    \min_{\bbH\in \ccalH} R_S(\bbH)\leq R_S(\bbH_E), \quad \min_{\bbH\in \ccalH} R_E(\bbH) = R_E(\bbH_E),
\end{align}
which leads to $GA= \min_{\bbH\in \ccalH} R_S(\bbH) - R_E(\bbH_E)$ bounded as
\begin{align}
    &\GeneralizationGap \leq R_S(\bbH_E) - R_E(\bbH_E)\\
    & = \mathbb{E}_{X_N}\left[ \ell\left( \bm\Phi(\bbH_E,\bbL_N, \bbx_{N}), \bby_{N}\right)\right] -  \ell(\bm\Phi(\bbH_E,\bbL_N,\bbx_{N}),\bby_{N}).
\end{align}
We can now subtract and add an intermediate-term  $\mathbb{E}[ \ell(\bbP_N \bm\Phi(\bbH_E, \ccalL,f),\bbP_N g)]$, where the expectation is taken over all the random sampling operator $\bbP_N$ that results in the random sampling point set $X_N$. The $GA$ term therefore can be bounded by 
\begin{align}
&\nonumber \GeneralizationGap \nonumber\leq\Bigg(\mathbb{E}_{X_N}\left[ \ell\left(  \bm\Phi(\bbH_E,\bbL_N,\bbx_{N}), \bby_{N}\right)\right] \\
& \nonumber \qquad \qquad \qquad \qquad \qquad - \mathbb{E}[\ell(\bbP_N \bm\Phi(\bbH_E, \ccalL,f),\bbP_N g)]\Bigg) +  \\
    &   \Bigg( \mathbb{E}[\ell(\bbP_N \bm\Phi(\bbH_E, \ccalL,f),\bbP_N g)]  - \ell(\bm\Phi(\bbH_E,\bbL_N,\bbx_{N}),\bby_{N})\Bigg). 
\end{align}
By absolute value inequality, we have
\begin{align}
&\nonumber \GeneralizationGap \nonumber\leq\Bigg|\mathbb{E}_{X_N}\left[ \ell\left(  \bm\Phi(\bbH_E,\bbL_N,\bbx_{N}), \bby_{N}\right)\right] \\
& \nonumber \qquad \qquad \qquad \qquad \qquad - \mathbb{E}[\ell(\bbP_N \bm\Phi(\bbH_E, \ccalL,f),\bbP_N g)]\Bigg| +  \\
    &   \left| \mathbb{E}[\ell(\bbP_N \bm\Phi(\bbH_E, \ccalL,f),\bbP_N g)]  - \ell(\bm\Phi(\bbH_E,\bbL_N,\bbx_{N}),\bby_{N})\right|.  \label{eqn:GA-decompose}
\end{align}
The \textbf{first term} in \eqref{eqn:GA-decompose} can be analyzed by combining the expectation as the random sampling operator leads to the random sampling set.
\begin{align}
   \nonumber  &\left|\mathbb{E}_{X_N}\left[ \ell\left(  \bm\Phi(\bbH_E,\bbL_N,\bbx_{N}), \bby_{N}\right)\right] - \mathbb{E}[\ell(\bbP_N \bm\Phi(\bbH_E, \ccalL,f),\bby_N)]\right| \\
  & = \left| \mathbb{E}[\ell\left(  \bm\Phi(\bbH_E,\bbL_N,\bbx_{N}), \bby_{N}\right) - \ell(\bbP_N \bm\Phi(\bbH_E, \ccalL,f),\bby_N)] \right|  \\
  & \leq \mathbb{E}\left[ \left| \ell\left(  \bm\Phi(\bbH_E,\bbL_N,\bbx_{N}), \bby_{N}\right) - \ell(\bbP_N \bm\Phi(\bbH_E, \ccalL,f),\bby_N)\right|\right]
\end{align}
Considering the Lipschitz continuity assumption of the loss function in Assumption \ref{ass:loss}, we have
\begin{align}
   & \mathbb{E}\left[ \left| \ell\left(  \bm\Phi(\bbH_E,\bbL_N,\bbx_{N}), \bby_{N}\right) - \ell(\bbP_N \bm\Phi(\bbH_E, \ccalL,f),\bby_N)\right|\right]\\
    &\leq \mathbb{E}\left[ \|\bm\Phi(\bbH_E,\bbL_N,\bbx_{N}) - \bbP_N \bm\Phi(\bbH_E, \ccalL,f),\bby_N) \|_2\right],
\end{align}
which can be bounded by the conclusion in Corollary \ref{cor:expect-diff}.

For the \textbf{second term} in \eqref{eqn:GA-decompose}, we subtract and add an intermediate-term $\ell(\bbP_N \bm\Phi(\bbH_E, \ccalL, f), \bbP_N g)$, with the triangle inequality, we have
\begin{align}
    \nonumber & \left| \mathbb{E}[\ell(\bbP_N \bm\Phi(\bbH_E, \ccalL,f),\bbP_N g)]  - \ell(\bm\Phi(\bbH_E,\bbL_N,\bbx_{N}),\bby_{N})\right|\\
    & \nonumber \leq \left| \mathbb{E}[\ell(\bbP_N \bm\Phi(\bbH_E, \ccalL,f),\bbP_N g)] - \ell(\bbP_N \bm\Phi(\bbH_E, \ccalL, f), \bbP_N g) \right|\\
    &+ \left|\ell(\bbP_N \bm\Phi(\bbH_E, \ccalL, f), \bbP_N g)  -\ell(\bm\Phi(\bbH_E,\bbL_N,\bbx_{N}),\bby_{N}) \right|. \label{eqn:decompose-2}
\end{align}
The first term in \eqref{eqn:decompose-2} is the deviation of the random term $\ell(\bbP_N \bm\Phi(\bbH_E, \ccalL, f), \bbP_N g)$, which can be bounded by Chebyshev's inequality as both the output value and the target functions are bounded. The second term in \eqref{eqn:decompose-2} can be bounded based on Assumption \ref{ass:loss} and Proposition \ref{prop:prob-diff} as
\begin{align}
   \nonumber & \left|\ell(\bbP_N \bm\Phi(\bbH_E, \ccalL, f), \bbP_N g)  -\ell(\bm\Phi(\bbH_E,\bbL_N,\bbx_{N}),\bby_{N}) \right|\\
   & \leq \|\bbP_N \bm\Phi(\bbH_E, \ccalL, f) - \bm\Phi(\bbH_E,\bbL_N,\bbx_{N})\|_2
\end{align}
Combining the upper bounds of these \textbf{two terms} and taking the leading orders from \eqref{eqn:6-4} and \eqref{eqn:8-3}, we can achieve the conclusion that 
\begin{align}
    GA = \ccalO\left( \left(\frac{\log \frac{N}{\delta}}{N}\right)^{\frac{1}{d+4}} 
   \right).
\end{align}

}
 
\end{document}